\newcommand{\parttens}{\otimes_{\boldsymbol{m},\boldsymbol{n}}}
\newcommand{\fullsol}{p_{\boldsymbol{m}\cdot\boldsymbol{n}}(t)}
\newcommand{\modessol}{v_{\boldsymbol{m}}(t)}
\newcommand{\statQsol}{\statQ_{\boldsymbol{n}}}
\newcommand{\ca}{Ca$^{2+}$}
\newcommand{\ipthree}{IP$_{3}$}
\newcommand{\ipr}{\ipthree{R}}
\newcommand{\iprOne}{type~I~\ipr}
\newcommand{\iprTwo}{type~II~\ipr}
\newcommand{\Po}{P_O}
\newcommand{\prob}{\mathbb{P}}
\DeclareMathOperator{\diag}{diag}
\DeclareMathOperator{\id}{id}
\newcommand{\fullM}{M}
\newcommand{\modesM}{\tilde{\fullM}}
\newcommand{\fullEv}{\zeta}
\newcommand{\modesEv}{\tilde{\fullEv}}
\newcommand{\Mi}{\text{M}^i}
\newcommand{\seqM}{S^k}
\newcommand{\seqOC}{T^k}
\newcommand{\modesMij}{\modesM^i_j}
\newcommand{\nM}{n_M}
\newcommand{\Qik}{Q^i_k}
\newcommand{\Qi}{Q^i}
\newcommand{\Qone}{\textcolor{parkcolor}{$Q^1$}}
\newcommand{\Qtwo}{\textcolor{drivecolor}{$Q^2$}}
\newcommand{\statQ}{\pi}
\newcommand{\statQi}{\statQ^i}
\newcommand{\statFullM}{\mu}
\newcommand{\statM}{\tilde{\statFullM}}
\newcommand{\statMi}{\statM^i}
\newcommand{\pnull}{p^0_{\boldsymbol{m} \cdot \boldsymbol{n}}}
\newcommand{\pMnull}{\tilde{m}_0}
\newcommand{\pQinull}{p^i}
\newcommand{\hmm}{\left( (\pMnull, \modesM), (\pQinull,\Qi)_{i=1}^{\nM} \right)}
\colorlet{darkgreen}{green!40!black}
\colorlet{parkcolor}{blue}
\colorlet{drivecolor}{brown!70!black}
\newcommand{\Mone}{\textcolor{parkcolor}{M$^1$}}
\newcommand{\Mtwo}{\textcolor{drivecolor}{M$^2$}}
\newcommand{\gO}{\textcolor{darkgreen}{O}}
\newcommand{\rC}{\textcolor{red}{C}}
\newtheorem{proposition}{\bf Proposition}[section]
\newtheorem{lemma}{Lemma}[section]
\newtheorem{definition}{Definition}[section]
\newtheorem{remark}{\bf Remark}[section]
\newcommand{\addresseshere}{%
  \enddoc@text\let\enddoc@text\relax
}
\begin{document}

\title{Modelling modal gating of ion channels with hierarchical Markov models}

\author[Ivo Siekmann,  Mark Fackrell, Edmund J. Crampin and Peter Taylor]{Ivo Siekmann$^{1,2}$,  Mark Fackrell$^{4}$, Edmund J. Crampin$^{1,2,3,4,5}$ and Peter Taylor$^{4}$}


\address{
  \begin{minipage}[t]{1.0\linewidth}
$^{1}$
\begin{minipage}[t]{1.0\linewidth}
Systems Biology Laboratory, Melbourne School of Engineering, University of Melbourne , Australia
\end{minipage}
\\
  $^{2}$
  \begin{minipage}[t]{1.0\linewidth}
    Centre for Systems Genomics, University of Melbourne, Australia
  \end{minipage}
  \\
  $^{3}$
  \begin{minipage}[t]{1.0\linewidth}
    ARC Centre of Excellence in Convergent Bio-Nano Science and
    Technology, 
    Australia
  \end{minipage}
\\
  $^{4}$
  \begin{minipage}[t]{1.0\linewidth}
    School of Mathematics and Statistics, University of Melbourne, Australia
  \end{minipage}
\\
  $^{5}$
  \begin{minipage}[t]{1.0\linewidth}
    School of Medicine, University of Melbourne, Australia
  \end{minipage}
\end{minipage}
}

\maketitle
\noindent\addresseshere 
%
\begin{abstract}
  Many ion channels spontaneously switch between different levels of
  activity. Although this behaviour known as modal gating has been
  observed for a long time it is currently not
  well understood. Despite the fact that appropriately representing
  activity changes is essential for accurately capturing time course
  data from ion channels
  , systematic approaches for 
  modelling modal gating are currently not available. In this paper,
  we develop a modular approach for building such a
  model 
  in an iterative process. First, stochastic switching between modes
  and stochastic opening and closing within modes are represented in
  separate aggregated Markov models. 
  Second, the continuous-time hierarchical Markov model, a new
  modelling framework proposed here
  , then enables us 
  to combine these components
  so that in the integrated model both mode switching as well as the
  kinetics within modes are appropriately represented. A mathematical analysis
  reveals that 
  the behaviour of the hierarchical Markov model naturally depends on
  the properties of its components.  We also demonstrate how a
  hierarchical Markov model can be parameterised using experimental
  data and show that it provides a better representation than a
  previous model of the same data set. Because evidence is increasing
  that modal gating reflects underlying molecular properties of the
  channel protein, it is likely that biophysical processes are better
  captured by our new approach than in earlier models.
\end{abstract}


\section{Introduction}
\label{sec:intro}

Ion channels regulate the flow of ions across the cell membrane by
stochastic opening and closing. As soon as it became possible to
detect currents generated by the movement of charged ions through the
channel via the patch-clamp technique \citep{Neh:76a}, \citet{Col:81a}
developed the theory of modelling single ion channels with
continuous-time Markov models which describe the time-course of
opening and closing that is reflected in single-channel currents by
stochastic jumps between zero (closed) and one or more small non-zero
current levels in the~pA range (open). 
The activity of an ion channel is usually measured by its open
probability~$\Po$. But by 1983, \citet{Mag:83a, Mag:83b} had already
observed spontaneous changes between different levels of channel
activity in the calcium-activated potassium channel. Since then this
phenomenon, known as modal gating, has been ubiquitously observed
across a wide range of ion channels but the significance of modal
gating has remained unclear.

In this study we present a general framework for building data-driven
models of ion channels that account for modal gating. 
This is essential for accurately representing the dynamics of an ion
channel---instead of producing a misleading constant intermediate open
probability~$\Po$, a model should represent the switching between
highly different levels of activity characteristic of each mode. {This
  is illustrated in Figure~\ref{fig:changepoint} where data points
  labelled~\Mone\ form a segment characterised by a low open
  probability whereas, the segment labelled~\Mtwo\ is characterised by
  a high open probability. In a realistic time series, the changes
  between~\Mone\ and~\Mtwo\ occur on a time scale so slow that a model
  fitted directly to the sequence of closed and open events would not
  be able to resolve this. Thus, instead of infrequent switching
  between high and low open probabilities, a model fitted directly to
  the data would show an intermediate open probability rather than
  switching between high and low open probabilities.} On the other
hand, modes of an ion channel have been associated with biophysical
properties of the channel protein \citep{Sie:14a}. Therefore, a model
accounting for modal gating is more likely to appropriately relate the
dynamics of ion channels to underlying biophysical states of the
channel protein.

{Nevertheless, except for two recent models of the inositol
  trisphosphate receptor~(\ipr), see \citet{Ull:12a, Sie:12a}, modal
  gating is usually not considered in ion channel models.}  One
difficulty in appropriately representing modal gating of ion channels
in a model is the fact that for a time series of measurements
collected from an ion channel it is impossible to infer directly in
which mode the channel is at a given point in time. However,
\citet{Sie:14a} have shown how this information can be obtained by
statistical changepoint analysis, see
Figure~\ref{fig:changepoint}. {The method identifies significant
  changes of the open probability between adjacent segments in time
  series of open and closed events recorded from an ion channel}.

\begin{figure}[htbp]
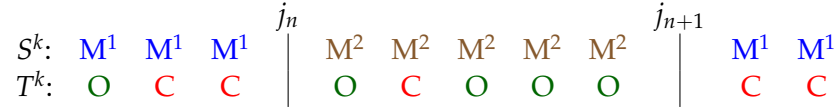

      \centering
      {\large
        \begin{tabular}{*{13}{c}}
          & & & & $j_n$ &&&&&& $j_{n+1}$\\
         $\seqM$: & \Mone & \Mone & \Mone & \vline & \Mtwo & \Mtwo & \Mtwo & \Mtwo & \Mtwo & \vline& \Mone & \Mone \\
          $\seqOC$: & \gO & \rC & \rC & {\vline} & \gO & \rC & \gO & \gO & \gO & {\vline} & \rC & \rC
        \end{tabular}
      }%
      \caption{After a statistical analysis of modal gating
        \citep{Sie:14a}, changepoints~$j_n$ 
        have been inferred for a time series of ion channel
        data. Through this segmentation, the original time
        series~$\seqOC$ of open~(\gO) and closed~(\rC) events has been
        augmented by the additional information~$\seqM$ of the
        mode~(\Mone, \Mtwo \dots) that the channel is in for a given
        point in time.}
  \label{fig:changepoint}
\end{figure}

As a result, after this analysis has been carried out, for each point
in the time series it is not only known if the channel is open~(\gO)
or closed~(\rC) but also, in which of the modes~\Mone, \Mtwo, \dots\
the channel is. {Previously, we observed stochastic
  switching between a nearly inactive mode~\Mone\ and a highly active
  mode~\Mtwo\ in data from the~\ipr\ \citep{Sie:14a}. In this paper we
  will represent the stochastic process of switching between an
  arbitrary number of different modes~$\Mi$ by a continuous-time
  Markov model with infinitesimal generator~$\modesM$. For data by
  \citet{Wag:12a}, empirical histograms suggest that the sojourn time
  distribution~$f_{\text{\Mone}}(t)$ within mode~\Mone\ is not
  exponential (see Figures~5 and 6 in \citet{Sie:14a} and
  Figure~\ref{fig:IPR1Ca10nMPark}
  ). For this reason, in general, more than one state is needed for
  accurately representing the process of switching between modes. This
  means that modal sojourn times are represented by phase-type
  distributions, a class of distributions which is defined by the time
  a Markov chain spends in a set of transient states until exiting to
  an absorbing state \citep{Neu:75a, 
    Neu:81a}. 
  We assume that the infinitesimal generator~$\modesM$ representing
  the switching between modes~$\Mi$, $i=1, \dots\ \nM$, has the
  following block structure:

\begin{equation}
  \label{eq:Mmodel}
  \modesM = \begin{pmatrix}
    \modesM^{1,1} & | & \modesM^{1,2} & | & \dots & | & \modesM^{1,\nM}\\
    \hline
    \modesM^{2,1} & | & \modesM^{2,2} & |& \dots & | & \modesM^{2,\nM}\\
    \vdots &  & & \ddots& & &\vdots\\
\vdots & & & & \ddots & & \vdots\\
    \vdots & & & & &\ddots & \vdots\\
    \modesM^{\nM,1} & | & \dots  & \dots &\dots & | & \modesM^{\nM,\nM}
  \end{pmatrix},
\end{equation}
where the block
matrices~$\modesM^{i,i} \in \mathbb{R}^{m_i\times m_i}$,
$m_i \in \mathbb{N}$, on the diagonal describe transitions between
states that represent the same mode~$\Mi$ whereas the off-diagonal
blocks~$\modesM^{i, j} \in \mathbb{R}^{m_i \times m_j}$ represent
transitions between states representing different modes~$\Mi$
and~$\text{M}^j$, $i\neq j$. An example for a model for switching
between two modes~\Mone\ and \Mtwo\ is shown in
Figure~\ref{fig:switchingmodel}.

Our modal gating analysis illustrated in Figure~\ref{fig:changepoint}
not only enables us to represent the stochastic process of switching
\emph{between} modes~$\Mi$ but by studying the dynamics within
representative segments we can investigate the processes of stochastic
opening and closing characteristic of each mode. For the example in
Figure~\ref{fig:changepoint} the dynamics \emph{within} mode~\Mtwo\
can be analysed by considering the sequence of open and closed events
between~$j_k$ and $j_{k+1}$. The dynamics within a mode~$\Mi$ can be
represented by a Markov model with infinitesimal generator~$\Qi$ which
is obtained by fitting to representative segments of the same mode
\citep{Sie:12a}. Similar to the sojourn times in the modes~$\Mi$, the
open and closed time distributions~$f_O(t)$ and~$f_C(t)$,
respectively, are non-exponential and more than one open or closed
state may be needed for accurately representing the dynamics. For the
example shown in Figure~\ref{fig:changepoint} we obtain two models
with infinitesimal generators~\Qone\ and~\Qtwo, see
Figure~\ref{fig:kineticsmodel}.


In this paper we develop a new mathematical model, the continuous-time
hierarchical Markov model, that accounts simultaneously for both
transitions \emph{between} modes as well as the stochastic opening and
closing \emph{within} modes.  Whereas a hierarchical Markov model in
discrete time has been previously described \cite{Fin:98a} we are not
aware of a continuous-time version discussed in the literature, so we
develop the mathematical theory in detail and prove some fundamental
properties. For the example of modal gating we assume that switching
between modes~$\Mi$ is a top-level process that regulates the
bottom-level process, the opening and closing of the channel
characteristic of a particular mode~$\Mi$. This is illustrated in
Figure~\ref{fig:modalGating}. 

\colorlet{darkgreen}{green!40!black}
  \tikzset{StayStyle/.style = {shape          = circle,
      ball color     = green!25!white,
      text           = darkgreen,
      minimum size   = 24 pt}}

  \tikzset{StayBlackStyle/.style = {shape          = circle,
      ball color     = green!25!white,
      text           = black,
      minimum size   = 24 pt}}

  \tikzset{LeaveStyle/.style = {shape          = circle,
      ball color     = red!20!white,
      text           = red,
      minimum size   = 24 pt}}

  \tikzset{ModeOneStyle/.style = {shape          = circle,
      ball color     = gray!70,
      text           = parkcolor,
      minimum size   = 24 pt}}

  \tikzset{ModeTwoStyle/.style = {shape          = circle,
      ball color     = orange!80!white,
      text           = brown!50!black,
      minimum size   = 24 pt}}

\tikzset{EdgeStyle/.style   = {
    very thick,
}}
\begin{figure}[htbp]
\subfloat[inter-modal transitions]{%
\label{fig:switchingmodel}
  \raggedright

  \includegraphics[width=0.8\textwidth]{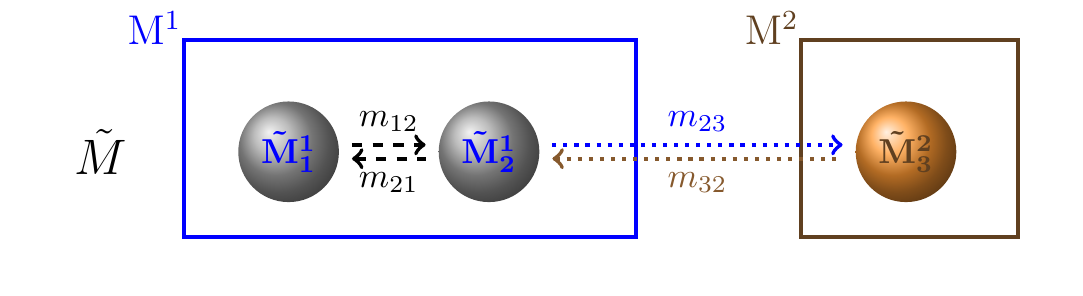}%
}\\

\subfloat[intra-modal dynamics]{
  \label{fig:kineticsmodel}
  \raggedleft

\qquad \qquad \qquad 
\includegraphics[width=0.58\textwidth]{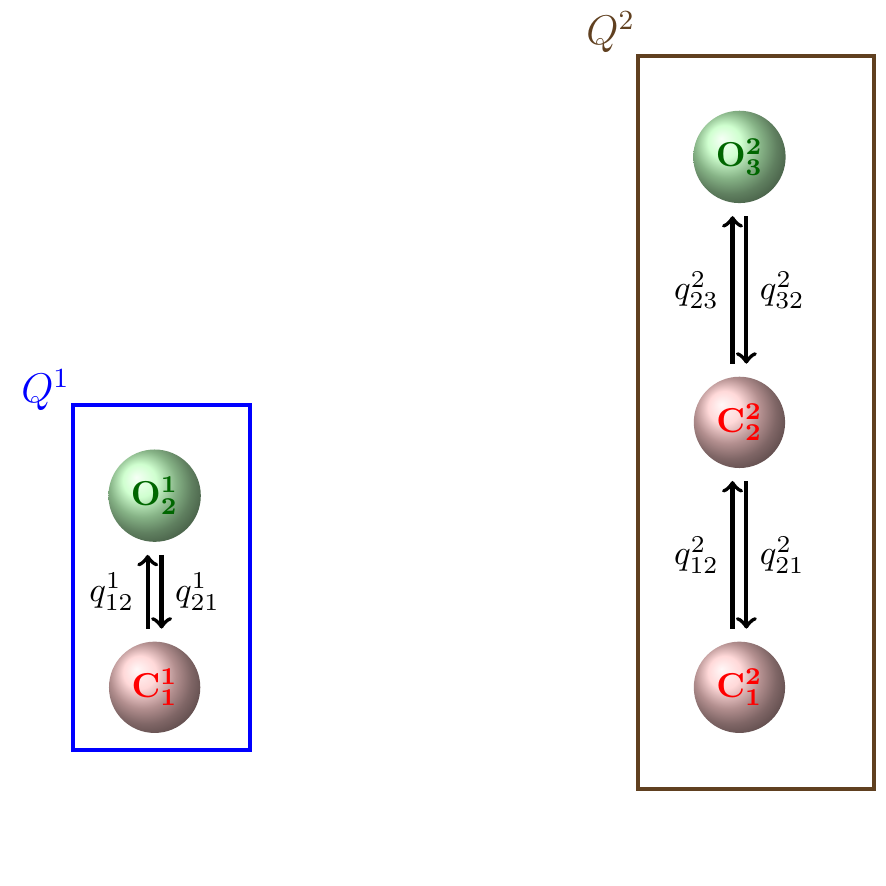}%
}
\caption{Modular components of a model for modal gating. (a) gives an
  example for an aggregated Markov model~$\modesM$ representing
  inter-modal dynamics, the stochastic switching between two modes,
  \Mone\ and \Mtwo. \Mone\ is modelled by an aggregate of two states
  whereas~\Mtwo\ is represented by one state. The
  rates~\textcolor{parkcolor}{$m_{23}$} and
  \textcolor{drivecolor}{$m_{32}$} stand for transitions between both
  modes. Note that~$\modesM$ may in general represent transitions
  between more than two modes, therefore the states~$\modesM^i_j$ are
  numbered consecutively by subscripts~$j$ whereas the
  superscripts~$i$ indicate the mode~$\Mi$. (b) shows models~\Qone\
  and \Qtwo\ representing the stochastic opening and closing that is
  characteristic of mode~\Mone\ or \Mtwo, respectively. The
  states~{$C^i_k$} and~\textcolor{darkgreen}{$O^i_k$} are numbered
  similarly to the~$\modesM^{i}_{j}$. Note that~$k=1, \dots, n_i$ for
  each mode~$\Mi$ in contrast to the states~$\modesM^{i}_j$ where the
  index~$j$ runs from 1 to the total number of states. In
  Figure~\ref{fig:modalamm} we show how~$\modesM$ and
  the~$\Qi$\text{s} are combined in a model that accurately represents
  both inter-modal transitions as well as intra-modal kinetics.}
  \label{fig:modalGating}
\end{figure}

The states~$\modesM^i_j$ are numbered consecutively by subscripts~$j$
whereas the superscripts~$i$ indicate the mode~$\Mi$.  While the model
is in mode~\Mone or analogously within one of the
states~\textcolor{parkcolor}{$\modesM^1_1$}
or~\textcolor{parkcolor}{$\modesM^1_2$}
(Figure~\ref{fig:switchingmodel}), its opening and closing is
described by the infinitesimal generator~\Qone\
(Figure~\ref{fig:kineticsmodel}). As soon as~\Mone\ is left to
state~\textcolor{drivecolor}{$\modesM^2_3$}, the current state of
model~\Qone\ is vacated and a state of model~\Qtwo\ is entered. Now,
opening and closing is accounted for by~\Qtwo\ until the
state~\textcolor{drivecolor}{$\modesM^2_3$} and mode~\Mtwo\ is left
and state~\textcolor{parkcolor}{$\modesM^1_2$} is entered.

The transitions between modes described via~$\modesM$ and the dynamics
within modes captured by~$\Qi$ illustrated in
Figure~\ref{fig:modalGating} can be represented in a Markov model with
infinitesimal generator~$\fullM$ that is derived from the individual
components~$\modesM$ and~$\Qi$. The idea is illustrated in
Figure~\ref{fig:modalamm} and developed formally in
Section~\ref{sec:methods}.

\begin{figure}
  \centering
\includegraphics[width=0.9\textwidth]{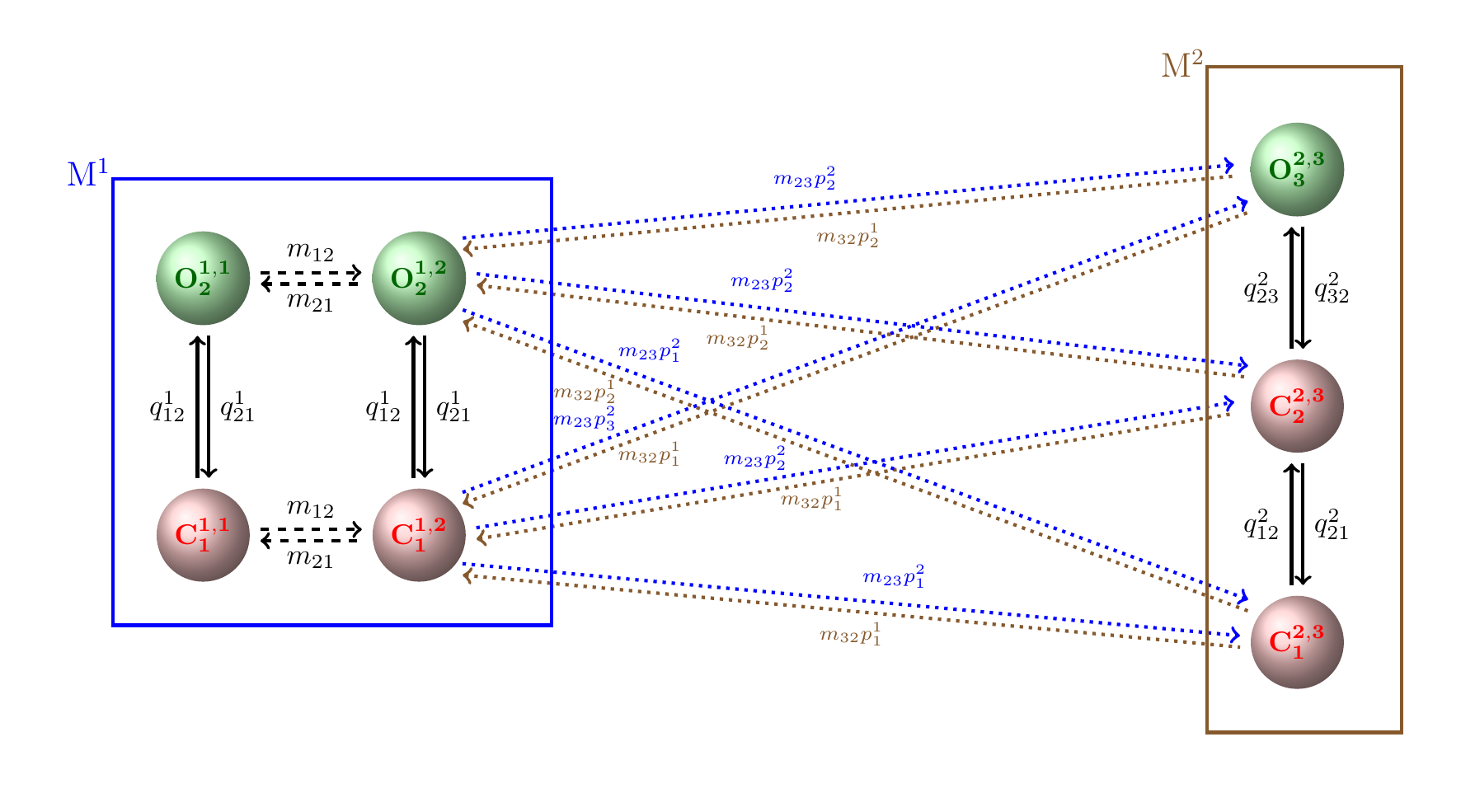}%
\caption{Aggregated Markov model that represents both transitions
  between modes~\Mone\ and \Mtwo\ according to model~$\modesM$
  (Figure~\ref{fig:switchingmodel}) as well as stochastic opening and
  closing consistent with models~\Qone\ and \Qtwo\
  (Figure~\ref{fig:kineticsmodel}). The open and closed states
  are~\textcolor{darkgreen}{$O^{i,j}_k$}
  and~\textcolor{red}{$C^{i,j}_k$}, respectively, where the
  superscripts~$i$,$j$ refer to the state~$\modesM^{i}_{j}$ in the
  model shown in Figure~\ref{fig:switchingmodel} whereas the
  subscript~$k$ is the index of the state within a model~$\Qi$ shown
  in Figure~\ref{fig:kineticsmodel}. This illustrates that the state
  set of the full model is obtained by the Cartesian product of states
  representing the modes~$\Mi$ with the states of the model~$\Qi$. Due
  to the transitions~$m_{12}$ and~$m_{21}$ between the two states
  representing~\Mone, in the full model there are two copies of
  model~\Qone\ connected by transition rates~$m_{12}$
  and~$m_{21}$. For transitions between modes, the
  rates~\textcolor{parkcolor}{$m_{23}$} exiting~\Mone\
  and~\textcolor{drivecolor}{$m_{32}$} exiting~\Mtwo\ are weighted
  with stochastic vectors~$p^1=(p^1_1,p^1_2)$
  and~$p^2=(p^2_1,p^2_2, p^2_3)$ that can be interpreted as initial
  distributions when entering~\Mone\ or \Mtwo.  }
  \label{fig:modalamm}
\end{figure}

In order to account for the states~$\modesM^{i}_j$
as well as the states~\textcolor{darkgreen}{$O^{i}_k$}
and~\textcolor{red}{$C^{i}_k$} representing the opening and closing
within~$\Mi$, the state space of the full model consists of the
Cartesian products of the $\modesM^{i}_j$ with
the~\textcolor{darkgreen}{$O^{i}_k$}
and~\textcolor{red}{$C^{i}_k$}. Thus, the state space of the full
model consists of open and closed
states~\textcolor{darkgreen}{$O^{i,j}_k$}
and~\textcolor{red}{$C^{i,j}_k$}, respectively, where the
superscripts~$i$,$j$ refer to the state~$\modesM^{i}_{j}$ in the model
shown in Figure~\ref{fig:switchingmodel} whereas the subscript~$k$ is
the index of the state within a model~$\Qi$ shown in
Figure~\ref{fig:kineticsmodel}. For the example shown in the figure,
the closed states~\textcolor{red}{$C^{1,1}_1$}
and~\textcolor{red}{$C^{1,2}_1$} as well as the open
states~\textcolor{darkgreen}{$O^{1,1}_2$}
and~\textcolor{darkgreen}{$O^{1,2}_2$} are connected by the transition
rates~$m_{12}$ and~$m_{21}$. Because~\Mone\ is modelled by two
states~$\modesM^{1}_1$ and~$\modesM^{1}_2$, two ``copies'' of~\Qone\
appear in the full model whereas there is only one ``copy'' of~\Qtwo\
which is represented by only one state in~$\modesM$. For transitions
between modes, the rates~\textcolor{parkcolor}{$m_{23}$}
exiting~\Mone\ and~\textcolor{drivecolor}{$m_{32}$} exiting~\Mtwo\ are
weighted with stochastic vectors~$p^1=(p^1_1,p^1_2)$
and~$p^2=(p^2_1,p^2_2, p^2_3)$ that can be interpreted as initial
distributions when entering~\Mone\ or
\Mtwo. 
The mathematical details of the construction of this model are
presented in Section~\ref{sec:methods}. }

It is a strength of our approach that it enables us to build
data-driven models of modal gating in a modular way.  After segmenting
ion channel data with the method by \citet{Sie:14a} we obtain a
stochastic sequence of events~$\Mi$ that describes the time course of
transitions between different modes. The infinitesimal
generators~$\modesM$ and the~$\Qi$ can then be parameterised from
these data. We demonstrate the practical implementation of this
approach in Section~\ref{sec:results} using experimental data by
\citet{Wag:12a} and compare the results with our previously published
model of the same data set \citep{Sie:12a}.

{
  We investigate the mathematical structure of the continuous-time
  hierarchical Markov model in more detail in
  Section~\ref{sec:analysis}. In particular we show that many
  important properties of the infinitesimal generator~$\fullM$ of the
  full model can be derived from the generators~$\modesM$
  and~$\Qi$. We expect that similar to its discrete-time counterpart
  \citep{Fin:98a}, the continuous-time hierarchical Markov model will
  have a variety of applications beyond the modelling of modal gating
  considered here.  } 

We discuss our approach to modal gating in
Section~\ref{sec:conclusions}. In particular we explain why our new
modelling framework is not only a better representation of ion channel
dynamics but also more likely than other modelling approaches to
provide a structure that realistically captures biophysical processes.

\newpage

\section{Methods}
\label{sec:methods}

\subsection{Preliminaries}
\label{sec:prelim}

We now develop formally the hierarchical Markov model illustrated
graphically in Figures~\ref{fig:modalGating}
and~\ref{fig:modalamm}. {First, let us describe the structure of the
  probability distribution~$p$ over the states of the hierarchical
  Markov model.}  Let~$v=(v^1; v^2; \dots; v^{\nM})$ denote a state
probability distribution 
of the model~$\modesM$. That is, for~$i=1, \dots, \nM$, $v^i$ is the
probability distribution of the states in mode~$\Mi$
. In general, we will allow~$\modesM$ to be an aggregated
Markov model so that each of the components~$v^i$ of the
vector~$v$ may itself be a vector. We make the convention that
components~$v^i$ and~$v^j$ that are meant to refer to a vector
are separated by semicolons, whereas components of a vector are
separated by commas. Let us first assume for simplicity that all
modes~$\Mi$ are represented by only one state so that the
components~$v^i$ are scalars. Then the distribution~$p$ over the
states of the full model~$\fullM$ is a weighting of the
distributions~$w^i$ over the distributions over the states of the
models~$\Qi$. Thus, we
obtain~$p:=(v^1 \cdot w^1; \dots; v^i\cdot w^i; \dots;
v^{\nM} \cdot w^{\nM})$.
Here `$\cdot$' denotes scalar multiplication of vectors~$w^i$ with
scalars~$v^i$. If more than one state is needed for representing
the modes~$\Mi$ 
we must generalise appropriately the ``weighting'' of a
vector~$w^i$ with a vector~$v^i$.  Such a generalisation is
provided by the tensor product~`$\otimes$'.

\begin{definition}[Kronecker product $\otimes$]
  We will only need the special case of the tensor product for
  matrices, the Kronecker product. Let $A \in \mathbb{R}^{m\times n}$,
  $B \in \mathbb{R}^{p \times r}$. Then

  \begin{equation}
    \label{eq:kroneckerprod}
    A \otimes B := (a_{ij} \cdot B)_{1 \leq i \leq m, 1 \leq j \leq n}
    =
    \begin{pmatrix}
      a_{11} B& \dots & a_{1n} B\\
      \vdots & \ddots & \vdots\\
      a_{m1} B & \dots & a_{mn} B
    \end{pmatrix} \in \mathbb{R}^{mp \times nr}.
  \end{equation}
  The Kronecker product also applies to vectors by identifying column
  vectors with $(m \times 1)$- and row vectors
  with~$(1 \times m)$-matrices.
\end{definition}




\begin{definition}[Kronecker sum $\oplus$]
The Kronecker sum of square matrices~$A \in \mathbb{R}^{m \times m}$
and~$B \in \mathbb{R}^{n \times n}$ is

\begin{equation}
  \label{eq:kroneckersum}
  A \oplus B := A \otimes \id_n + \id_m \otimes B \in \mathbb{R}^{mn
    \times mn}, 
\end{equation}
where~$\id_m$ and~$\id_n$ are the identity matrices of the respective
dimensions
.
\end{definition}

For some properties of Kronecker product and sum that we require for
our analysis of the hierarchical Markov model
(Section~\ref{sec:analysis}) we refer to
Appendix~\ref{sec:mathsbackground}. For a distribution~$v$ over the
states of an aggregated Markov model, subvectors that represent the
distributions over the states of the same mode~$\Mi$ can be naturally
described by partitions.

\begin{definition}[Partitioned vectors, multi-indices
  ]
  \label{def:partvector}
  A multi-index is any
  vector~$\boldsymbol{\alpha}=(\alpha_1, \dots, \alpha_d) \in
  \mathbb{N}^d$.  We define the absolute
  value~$|\boldsymbol{\alpha}|=\sum_{i=1}^d \alpha_i$ and
  denote~$\dim(\boldsymbol{\alpha})=d$ the dimension of~$\boldsymbol{\alpha}$.\\
  A vector~$v
  $ is
  partitioned by a multi-index~$\boldsymbol{\alpha}$ if
  \[
  v_{\boldsymbol{\alpha}}:=(v^1; \dots; v^i; \dots; v^{\dim (\boldsymbol{\alpha})})
\]
and for each~$i$ we have~$v^i \in \mathbb{R}^{\alpha_i}$. \\
Selection of the~$i$-th partition of~$v_{\boldsymbol{\alpha}}$ is written as
\[
v_{\boldsymbol{\alpha}}(i)=v^i.
\]
The vector space of $\boldsymbol{\alpha}$-partitioned vectors~$v_{\boldsymbol{\alpha}}$ is denoted~$\mathbb{R}^{\boldsymbol{\alpha}}$.
\end{definition}

How distributions~$p$ over the states of a hierarchical Markov model
relate to distributions over the states of~$\modesM$ and~$\Qi$ can be
clarified by the tensor product of partitioned vector spaces.




\begin{definition}[Tensor product~$\mathbb{R}^{\boldsymbol{m}} \parttens \mathbb{R}^{\boldsymbol{n}}$ of~$d$-partitioned vector spaces]
\label{def:parttens}
  Let~$\boldsymbol{m}, \boldsymbol{n} \in \mathbb{N}^d$,
  $v_{\boldsymbol{m}} \in \mathbb{R}^{\boldsymbol{m}}$,
  $w_{\boldsymbol{n}} \in \mathbb{R}^{\boldsymbol{n}}$ be
  $d$-partitioned vectors. Then the tensor
  product~$u_{\boldsymbol{m} \cdot \boldsymbol{n}}$ of~$d$-partitioned vectors~$v_{\boldsymbol{m}}$ and~$w_{\boldsymbol{n}}$ is defined by

  \begin{equation}
    \label{eq:tensorpartition}
    u_{\boldsymbol{m} \cdot \boldsymbol{n}}:=v_{\boldsymbol{m}} \parttens w_{\boldsymbol{n}} := (v^1 \otimes w^1; \dots; v^i \otimes w^i; \dots; v^{d} \otimes w^{d}),
  \end{equation}
  with the component-wise product~${\boldsymbol{m}\cdot\boldsymbol{n}}$
  of $\boldsymbol{m}$ and $\boldsymbol{n}$.  With the tensor product
  `$\parttens$' we obtain the vector space

\[
\mathbb{R}^{\boldsymbol{m}} \parttens \mathbb{R}^{\boldsymbol{n}}
\]
of the $d$-partitioned vector spaces~$\mathbb{R}^{\boldsymbol{m}}$ and $\mathbb{R}^{\boldsymbol{n}}$.
\end{definition}

\begin{remark} We make some remarks regarding the interpretation of Definition~\ref{def:parttens}:
  \begin{compactitem}
  \item It can be easily verified that `$\parttens$' fulfils the
    properties of a tensor product on the vector
    space~$\mathbb{R}^{\boldsymbol{m}} \parttens
    \mathbb{R}^{\boldsymbol{n}}$.
  \item
    Vectors~$u_{\boldsymbol{m} \cdot \boldsymbol{n}} \in
    \mathbb{R}^{\boldsymbol{m}} \parttens \mathbb{R}^{\boldsymbol{n}}$
    can be written as linear combinations
\begin{equation}
\label{eq:elementparttens}
u_{\boldsymbol{m} \cdot \boldsymbol{n}}=\sum_{k=1}^{d}\sum_{i=1}^{m_k}\sum_{j=1}^{n_k} a^k_{ij} (v^{k,i}_{\boldsymbol{m}} \parttens w^{k,j}_{\boldsymbol{n}}), \quad a^k_{i,j} \in \mathbb{R}
\end{equation}
where~$d=\dim{\boldsymbol{m}}=\dim{\boldsymbol{n}}$.  By choosing bases~$\{v^{k,i} \}$, $i=1, \dots, m_k$, $\{w^{k,j} \}$, $j=1, \dots, n_k$, we obtain systems of linearly independent vectors
\begin{align*}
v^{k,i}_{\boldsymbol{m}}&=(0; \dots; v^{k,i}; \dots; 0) \in \mathbb{R}^{\boldsymbol{m}}\\
w^{k,j}_{\boldsymbol{n}}&=(0; \dots; w^{k,j}; \dots; 0) \in \mathbb{R}^{\boldsymbol{n}}
\end{align*}
Thus, from~\eqref{eq:elementparttens} it is easy to see that
\[\mathbb{R}^{\boldsymbol{m}} \parttens \mathbb{R}^{\boldsymbol{n}}
\cong \mathbb{R}^{\boldsymbol{m}\cdot\boldsymbol{n}}\]
where~${\boldsymbol{m}\cdot\boldsymbol{n}}$ again denotes the
component-wise product of $\boldsymbol{m}$ and $\boldsymbol{n}$.
\end{compactitem}
\end{remark}

\subsection{A hierarchical Markov model for modal gating}
\label{sec:mgamm}

Based on the block structure~\eqref{eq:Mmodel} of~$\modesM$ we now
show how a transition matrix for the full model can be calculated from
its components~$\hmm$. Let~$\boldsymbol{m}$ and~$\boldsymbol{n}$ be
the multi-indices defined above. The transitions within the
modes~$\Mi$ are represented in the full model by block
matrices~$\fullM^{i,i}=\modesM^{i,i} \oplus \Qi \in \mathbb{R}^{m_i
  n_i \times m_i n_i}$.
It follows that~$\dim \fullM^{i,i} = m_i
n_i$. 
Moreover, we define the matrix of
initial conditions for a transition from~$\Qi$ to~$Q^j$ by

\begin{equation}
  \label{eq:Pij}
  P^{i,j} = u_{n_i}^T \otimes p^j = p^j \otimes u_{n_i} ^T,
\end{equation}
where the row vector~$p^j \in \mathbb{R}^{1 \times n_j}$ is the
initial condition for~$Q^j$ from Definition~\ref{def:components},
and~$u_{n_i}^T \in \mathbb{R}^{n_i \times 1}$ is a column vector of
ones. We observe that $P^{i,j} \in \mathbb{R}^{n_i \times n_j}$ so
that, for $i \neq j$ we
have~$\fullM^{i,j}=\modesM^{i,j} \otimes P^{i,j} \in \mathbb{R}^{m_i
  n_i \times m_j n_j}$
. We can now define the components of a continuous-time hierarchical
Markov model and calculate its infinitesimal generator:


{Analogous to the discrete-time hierarchical Markov
  model by \citet{Fin:98a}, we define a continuous-time hierarchical
  Markov model. 
}

\begin{definition}[Components of a continuous-time hierarchical Markov
  model
  ]
\label{def:components}
A continuous-time hierarchical Markov model 
(with a two-level hierarchy) is specified by the components~$\hmm$:

\begin{compactitem}
\item An infinitesimal generator~$\modesM$ of a Markov model with
  initial distribution~$\pMnull$ with aggregates of states~$\Mi$,
  $i=1, \dots, \nM$. The~$\Mi$ are referred to as modes.
\item For each mode a Markov model with infinitesimal generator~$\Qi$
  and initial distribution~$\pQinull$
  .
\end{compactitem}
Then the infinitesimal generator~$\fullM$ of the aggregated model
for modal gating is calculated as follows:
\begin{equation}
  \label{eq:fullM}
  \fullM = \begin{pmatrix}
    \modesM^{1,1} \oplus Q^1 & | & \modesM^{1,2} \otimes P^{1,2} & | & \dots & | &
    \modesM^{1,\nM} \otimes P^{1,\nM} \\
    \hline
    \modesM^{2,1} \otimes P^{2,1} & | & \modesM^{2,2} \oplus Q^2 & |& \dots & | &
    \modesM^{2,\nM} \otimes P^{2,\nM} \\
    \vdots &  & & \ddots& & &\vdots\\
\vdots & 
& & & \ddots & & \vdots\\
    \vdots & & & & &\ddots & \vdots\\
    \modesM^{\nM,1} \otimes P^{\nM,1} & | & \dots  & \dots &\dots & | & \modesM^{\nM,\nM} \oplus Q^{\nM}
  \end{pmatrix}.
\end{equation}
\end{definition}

It is straightforward to generalise this definition recursively to an
arbitrary number of hierarchies. From Definition~\ref{def:parttens}
and~\eqref{eq:tensorpartition} we know that an arbitrary
distribution~$p$ over the states of the full model can be represented
by a linear combination of tensor products of the
form~\eqref{eq:tensorpartition}. We now require for initial
distributions that they should arise from a single tensor product of
initial distributions over the states of~$\modesM$ and initial
distributions over the states of the~$\Qi$.

\begin{definition}[Initial distribution over the states of a
  hierarchical Markov model]
  Let~$v_{\boldsymbol{m}}$ be the initial distribution over the states
  of the top-level model~$\modesM$ and~$w_{\boldsymbol{n}}$, a vector
  whose components~$w^i$ are initial distributions over the states of
  the models~$\Qi$. Then the initial distribution $\pnull$ over the
  states of the full model~$\fullM$ is calculated by the tensor
  product~`$\parttens$' introduced in Definition~\ref{def:parttens}:
  \begin{equation}
    \label{eq:state}
    \pnull=v_{\boldsymbol{m}} \parttens w_{\boldsymbol{n}} =(v^1 \otimes w^1; \dots; v^i \otimes w^i; \dots; v^{\nM} \otimes w^{\nM}).
  \end{equation}
  \label{def:initialdistribution}
\end{definition}

\begin{remark}
\label{rem:defstate}
We make some remarks regarding the interpretation of Definition~\ref{def:initialdistribution}:%
  \begin{compactitem}
  \item Note that whereas~$v_{\boldsymbol{m}}$ is a stochastic vector,
$w_{\boldsymbol{n}}$ is not.  It is easy to see that~$\pnull$ is a
stochastic vector.
\item Algebraically, Definition~\ref{def:initialdistribution}
  constrains initial distributions to so-called pure tensors which can
  be written as a single tensor product rather than a linear
  combination of tensor products.
\item Statistically, Definition~\ref{def:initialdistribution} says
  that for the initial distribution the probabilities of being in a
  state~$\modesMij$ and a state~$\Qik$ are stochastically independent:
  the joint probability of being in~$\modesMij$ and~$\Qik$ is the
  product of the individual probabilities~\eqref{eq:state}.

\end{compactitem}
\end{remark}

It is an interesting question if the time-dependent
solution~$\fullsol$ or the stationary distribution of the full
model~$\fullM$ remain in the
form~$\fullsol=\modessol\parttens w_{\boldsymbol{n}}(t)$ for $t>0$. In
fact, this is generally not the case.

\begin{remark}
  \textbf{Caution:} In most situations, $\fullsol$ cannot be written
  as a pure tensor
  $\fullsol=\modessol \parttens w_{\boldsymbol{n}}(t)$ for~$t>0$.  As
  discussed in Proposition~\ref{the:fullsolution} we obtain a
  solution~$(\modessol \parttens \statQsol)$ for a
  solution~$\modessol$ of~$\modesM$ and a vector~$\statQsol$ of
  stationary solutions~$\statQi$ of~$\Qi$ if and only if we choose
  initial conditions~$p^i=\statQi$ for all~$\Qi$.
\end{remark}

\subsection{Example}
\label{sec:modelex}

{As an example for the construction of the
  infinitesimal generator~$\fullM$ from the components $\hmm$ we
  present a model that will be used in Section~\ref{sec:results} for
  experimental data from the inositol trisphosphate receptor (\ipr).}

Let the infinitesimal generator for the switching between modes be

\begin{equation}
\label{eq:modesM}
\modesM=
\begin{pmatrix}
    -m_{13} & 0 & \vline & m_{13}\\
    0 & -m_{23} & \vline & m_{23}\\
    \hline
    m_{31} & m_{32} & \vline & -m_{31}-m_{32}
\end{pmatrix}
\end{equation}
and the models representing the intra-modal kinetics

{
\begin{equation}
  \label{eq:Qonetwo}
  Q^1 =
  \begin{pmatrix}
    -q^1_{12} & q^1_{12}\\
    q^1_{21} & -q^1_{21}
  \end{pmatrix} 
\text{ and }
  Q^2 =
  \begin{pmatrix}
    -q^2_{12} & q^2_{12}& 0 & 0\\
    q^2_{21} & -q^2_{21}-q^2_{23}-q^2_{24} & q^2_{23} & q^2_{24}\\
   0 & q^2_{32} & -q^2_{32} & 0\\
   0 & q^2_{42} & 0& -q^2_{42}
  \end{pmatrix} 
\end{equation}
with initial conditions
\begin{equation}
  \label{eq:pnull}
  p^1=(p^1_1, p^1_2) \text{ and } p^2=(p^2_1, p^2_2, p^2_3, p^2_4).
\end{equation}
}%
Then

\begin{align}
&  \fullM = \begin{pmatrix}
    \modesM^{1,1}_2 \oplus Q^1 & \vline & \modesM^{1,2}_2 \otimes P^{1,2}\\
    \hline
    \modesM^{2,1}_2 \otimes P^{2,1} & \vline & \modesM^{2,2}_2 \oplus Q^2
  \end{pmatrix} \nonumber \\
  \label{eq:exFullM}
         &  = 
           \left(\begin{smallmatrix}
             -m_{13} - q^1_{12} & q^1_{12} & 0 & 0 & \vline & m_{13}
             p^2_1 & m_{13} p^2_2 & m_{13} p^2_3 & m_{13} p^2_4\\
             q^1_{21} & -m_{13} - q^1_{21} & 0 & 0 & \vline & m_{13}
             p^2_1 & m_{13} p^2_2 & m_{13} p^2_3 & m_{13} p^2_4\\
             0 & 0 & - m_{23} - q^1_{12} & q^1_{12} &\vline & m_{23}
             p^2_1 & m_{23} p^2_2 & m_{23} p^2_3 & m_{23} p^2_4\\
             0 & 0 & q^1_{21} & - m_{23} - q^1_{21} & \vline & m_{23}
             p^2_1 & m_{23} p^2_2 & m_{23} p^2_3 & m_{23} p^2_4\\
             \hline
             m_{31} p^1_1 & m_{31} p^1_2 & m_{32} p^1_1& m_{32} p^1_2
             & \vline & -R-
             q^2_{12} & q^2_{12} & 0 & 0\\
             m_{31} p^1_1 & m_{31} p^1_2 & m_{32} p^1_1& m_{32} p^1_2
             & \vline & q^2_{21} & -R -
             q^2_{21}- q^2_{23} - q^2_{24}& q^2_{23} & q^2_{24}\\
             m_{31} p^1_1 & m_{31} p^1_2 & m_{32} p^1_1& m_{32} p^1_2& \vline & 0
             &q^2_{32}& -R - q^2_{32} & 0\\
             m_{31} p^1_1 & m_{31} p^1_2 & m_{32} p^1_1& m_{32} p^1_2 & \vline & 0 &
             q^2_{42}& 0 & -R- q^2_{42}
           \end{smallmatrix}
                           \right)
\end{align}
with~$R:=m_{31}+m_{32}$. 



\subsection{Parameterising the model with experimental data}
\label{sec:fit}

In order to parameterise the components $\hmm$ of our model, the
infinitesimal generators~$\modesM$ and~$\Qi$ have to be inferred from
ion channel data. We assume that the original data, a sequence of
current measurements recorded with a constant sampling interval~$\tau$
has been statistically analysed so that it has the form of
Figure~\ref{fig:changepoint}. Then each measurement has been
classified as open~(\gO) or closed~(\rC) and it has also been
determined in which mode~$\Mi$ the channel was at this point in
time. The Markov model~$\modesM$ is inferred from the sequence~$\seqM$
of modes~$\Mi$ whereas the models~$\Qi$ are parameterised from
sequences of~$\seqOC$ that are representative of a particular
mode. For example, in Figure~\ref{fig:changepoint}, the five data
points between~$j_{n}$ and~$j_{n+1}$ could be used for inferring the
model~\Qtwo\ representing the stochastic opening and closing within
mode~\Mtwo.

{All models are parameterised with the Bayesian method
  developed in \citet{Sie:11a, Sie:12b}. For inferring the
  infinitesimal generator~$\modesM$ the likelihood has the form

\begin{equation}
  \label{eq:likelihood}
  \prob((\seqM)|\modesM) = \statM \cdot P_{S^1} \cdot \exp (\modesM \tau) \cdot P_{S^2} \cdot ... \cdot \exp (\modesM \tau) \cdot P_{S^N} \cdot u^T,
\end{equation}
where~$(\seqM)$ is a sequence of observations of modes~$\Mi$ separated
by the sampling interval~$\tau$, $\modesM$ is the infinitesimal
generator of an aggregated Markov model, $\statM$ is the stationary
distribution of~$\modesM$ and~$u^T$ is a column vector of ones. The
matrices~$P_{\seqM}$ project to the states of the model that represent
the mode observed at data point~$k$. For example, 

\begin{equation}
\label{eq:project}
P_{\text{\Mone}} =
\begin{pmatrix}
  \id_{m_1} &\vline& 0& \vline& \dots &\vline&  0\\
  \hline
  0 &\vline& \dots\\
\vdots \\
0 &\vline& \dots &\vline& \dots &\vline& 0
\end{pmatrix}
\end{equation}
with the same block structure as in~\eqref{eq:Mmodel} projects to
states representing mode~\Mone, the other projection matrices
$P_{S^i}$ are defined equivalently. The likelihood for inferring the
infinitesimal generators~$\Qi$ from representative segments
of~$\seqOC$ of open~(\gO) and closed~(\rC) events
(Figure~\ref{fig:changepoint}) is analogous
to~\eqref{eq:likelihood}. See \citet{Sie:11a,Sie:12b} for a detailed
description of the
method.} 



\newpage

\section{Data-driven modelling of modal gating}
\label{sec:results}

{Our new framework enables us to easily construct and
  parameterise models for modal gating following a transparent
  iterative process:

  \begin{enumerate}
  \item Infer the stochastic process~$\seqM$ of switching between
    modes~$\Mi$ (Figure~\ref{fig:changepoint}) using the statistical
    method by \citet{Sie:14a}.
  \item Model the process~$\seqM$ of mode switching by parameterising
    an infinitesimal generator~$\modesM$
    (Figure~\ref{fig:switchingmodel}).
  \item From segments of~$\seqOC$ representative for the opening of
    closing within each of the modes~\Mone,
    \Mtwo, \dots\ (Figure~\ref{fig:kineticsmodel}) parameterise
    infinitesimal generators~\Qone, \Qtwo, \dots
  \item Choose initial distributions~$\pMnull$ and $p^i$ and
    combine all components $\hmm$ by calculating the infinitesimal
    generator~$\fullM$ of the full model (Figure~\ref{fig:modalamm}).
  \end{enumerate}

  Inferring~$\modesM$ and~$\Qi$ using the Bayesian approach briefly
  described in Section~
  \ref{sec:fit} ensures that the resulting model will be highly
  parsimonious because at each step a model with the optimal number of
  parameters for representing stochastic switching between modes, and
  opening and closing within modes, is determined.
} 
We demonstrate the practical implementation of this process using data
collected by \citet{Wag:12a} and compare the results with our
previously published model of the same data set \citep{Sie:12a}.

\subsection{Step (i): Statistical analysis of modal gating}
\label{sec:modalstats}

Previously, we have statistically analysed mode switching exhibited in
the data by \citet{Wag:12a} and found two modes, the nearly
inactive 
mode \Mone\ with a very low open probability and the highly
active 
mode \Mtwo\ with~$\Po\approx 70\%$, see \citet{Sie:14a} for
details. As illustrated in Figure~\ref{fig:changepoint} we have a
stochastic sequence of events~\Mone\ and \Mtwo\ that are separated by
a sampling interval~$\tau=\unit{0.05}{\milli \second}$.  We have
results from two types of the inositol trisphosphate
receptor~(\iprOne\ and \iprTwo) for various calcium concentrations
(\ca), \unit{0.01}{\micro M}, \unit{0.05}{\micro M} and
\unit{5}{\micro M}, at fixed concentrations of \unit{10}{\micro M}
inositol trisphosphate (\ipthree) and \unit{5}{\milli M} adenosine
trisphosphate (ATP). 
Empirical histograms of the sojourn times in~\Mone\ and \Mtwo\ for all
except one data set indicate that whereas time spent in the active
mode \Mtwo\ may be represented satisfactorily by one state, accurately
representing sojourn times in the nearly inactive mode~\Mone\ seems to
require at least two states, see Figure~\ref{fig:compParkDrive} for an
example. Whereas one state accounts for the support of the sojourn
time density in mode~\Mtwo\ (Figure~\ref{fig:IPR1Ca10nMDrive}) the
more widespread sojourn time density in mode~\Mone\ is better
approximated by two states (Figure~\ref{fig:IPR1Ca10nMPark}). Thus,
for five of our six data sets we parameterise~$\modesM$ with the
structure of~\eqref{eq:modesM}. For one data set (\iprTwo\ at
\unit{0.05}{\micro M} \ca), the histograms suggests that we need a
model with two states representing~\Mone\ and two states
representing~\Mtwo\ (Figure~\ref{fig:Ca50nMParkDrive}). Thus, for
these data we use the following infinitesimal generator:

\begin{equation}
  \label{eq:modesMCa50}
  \modesM_{\text{Type 2 \ipr, \unit{0.05}{\micro M} \ca}}=
\begin{pmatrix}
    -m_{12}- m_{12}& m_{12} & \vline & m_{13} & 0\\
    0 & -m_{24} & \vline & 0 & m_{24}\\
    \hline
    m_{31} & m_{32} & \vline & -m_{31} -m_{32} &0\\
    0 & m_{42} & \vline & 0 & -m_{42}
\end{pmatrix}.
\end{equation}

\subsection{Step (ii): Parameterising $\modesM$}
\label{sec:modesM}

Fitting~$\modesM$ to a time series~$\seqM$ of~\Mone\ and~\Mtwo\ using
our MCMC method \citep{Sie:11a, Sie:12b} is a challenging
problem. Because in a time series of a few hundred thousand up to
about a million data points the number of transitions between the two
modes is only in the order of hundreds, the data from which the rate
constants have to be inferred are effectively very limited---despite
the large number of data points. An example of a convergence plot
shown in Figure~\ref{fig:convergence} demonstrates that values of the
two rates, $m_{13}$ and~$m_{23}$, alternate. This is due to symmetry
in the model structure chosen for the model~$\modesM$ where the two
states~$M^1_1$ and~$M^1_2$ can be swapped without changing the
model. This effect can be removed by considering only one mode of the
multi-modal posterior, in this case by considering only samples
where~$m_{31}$ exceeds a certain threshold. Nevertheless, even after
this correction some parameters such as the rate~$m_{23}$ show a high
degree of uncertainty indicated by a widespread marginal distribution
(Figure~\ref{fig:convergence}). Mean values and standard deviations of
the distributions of the model parameters are summarised in
Tables~\ref{tab:ipr1} and~\ref{tab:ipr2}.

\begin{figure}[htbp]
  \raggedright
  \begin{minipage}{0.45\linewidth}
  \subfloat[]{%
    \includegraphics[width=
    \linewidth]{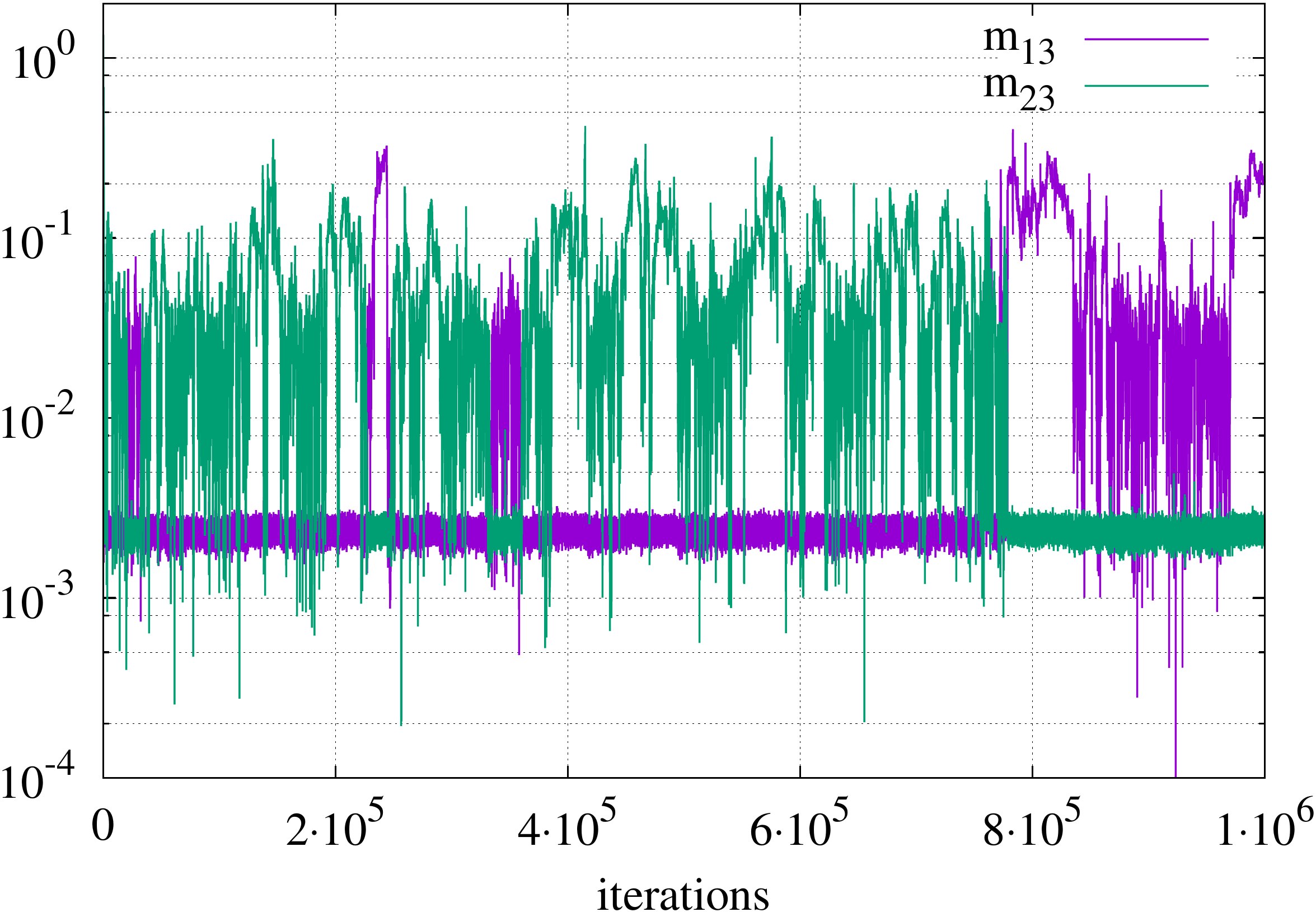}%
  }%
  \end{minipage}
\;
\begin{minipage}{0.45\linewidth}

  \subfloat[]{%
    \includegraphics[width=
    \linewidth]{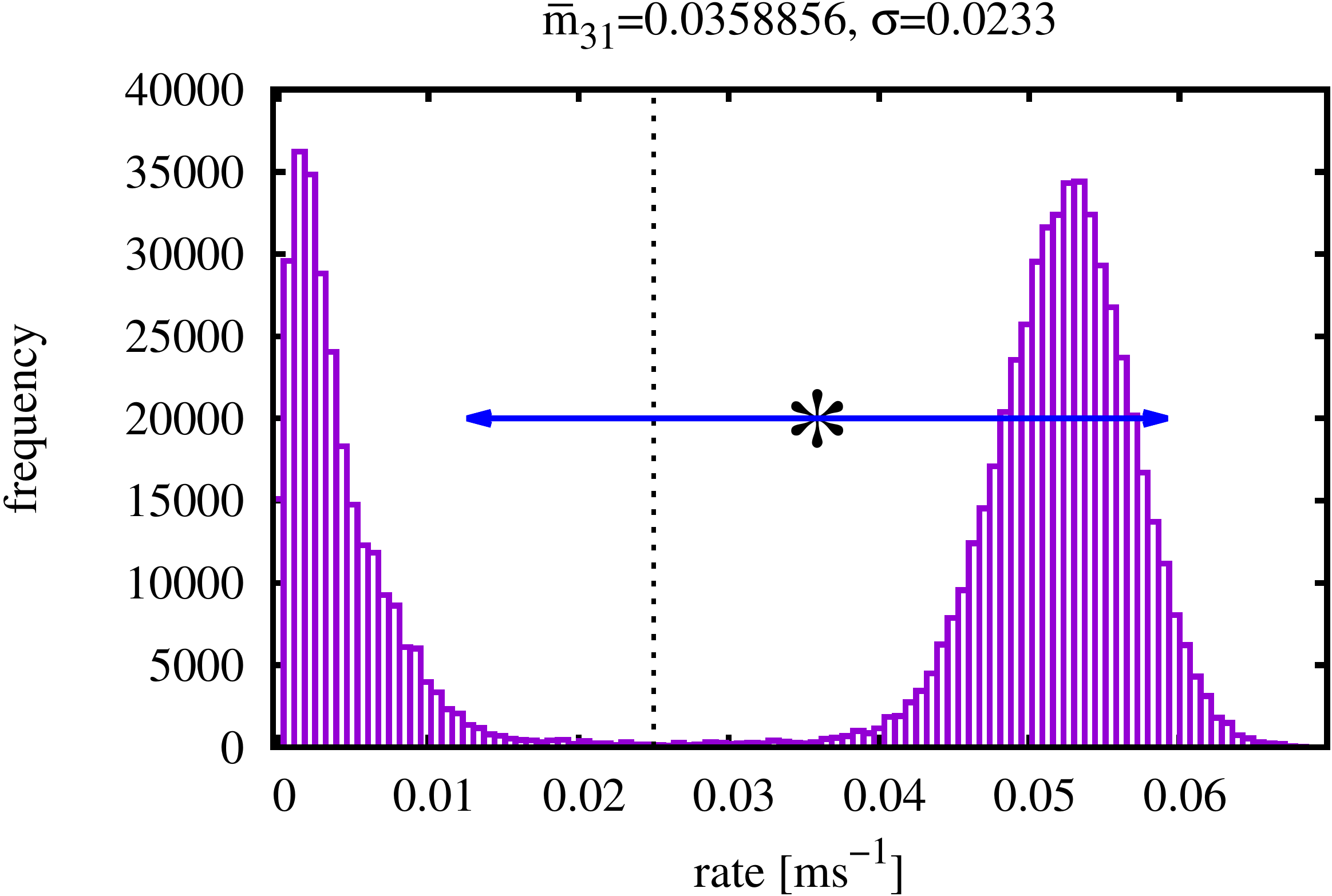}%
}%
\end{minipage}
\\
\medskip

\begin{minipage}{0.45\linewidth}
  \subfloat[]{%
    \includegraphics[width=\linewidth]{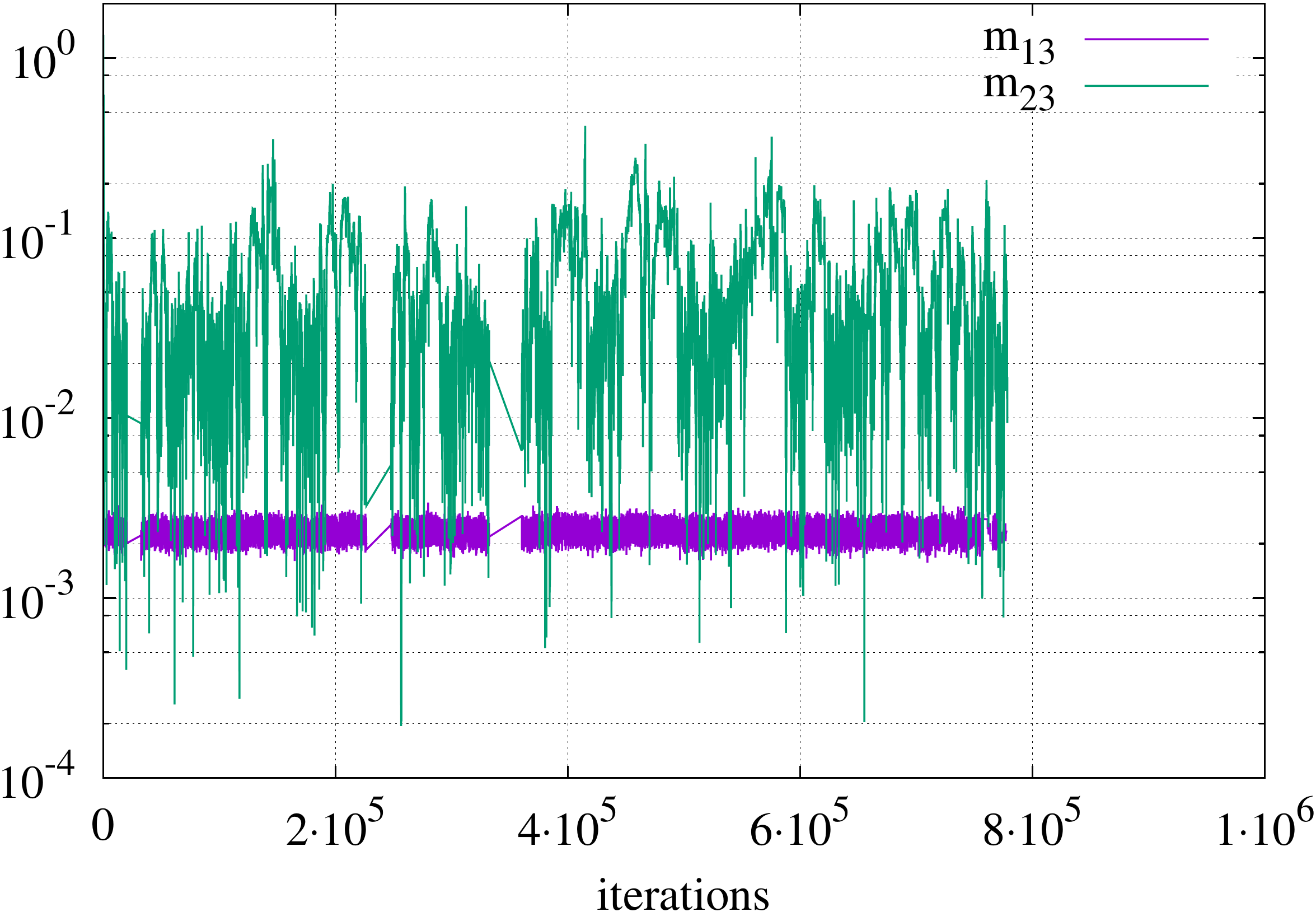}%
  }%
\end{minipage}
\;
\begin{minipage}{0.45\linewidth}
  \subfloat[]{%
    \includegraphics[width=\linewidth]{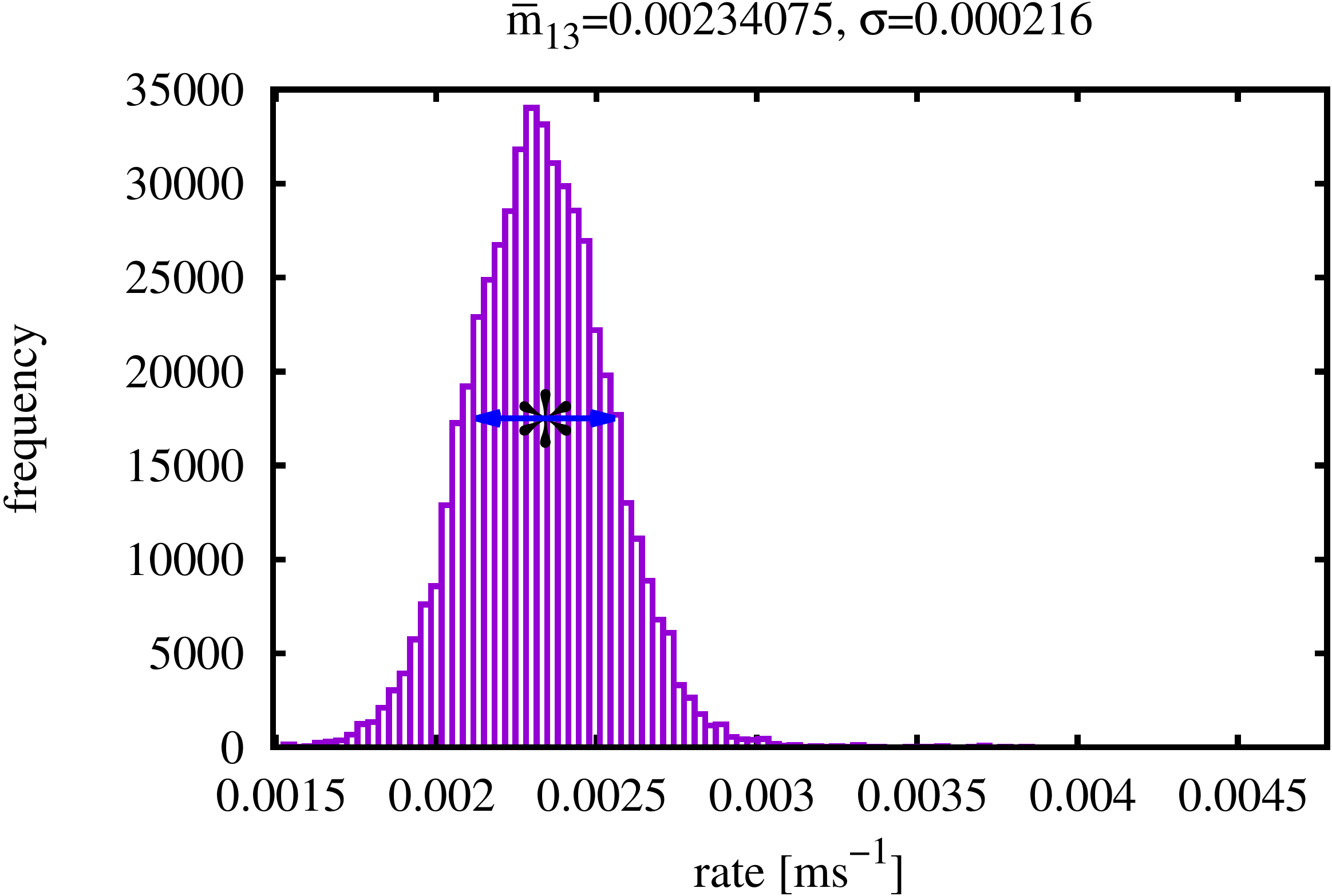}%
  }%
 \\
  \subfloat[]{%
    \includegraphics[width=\linewidth]{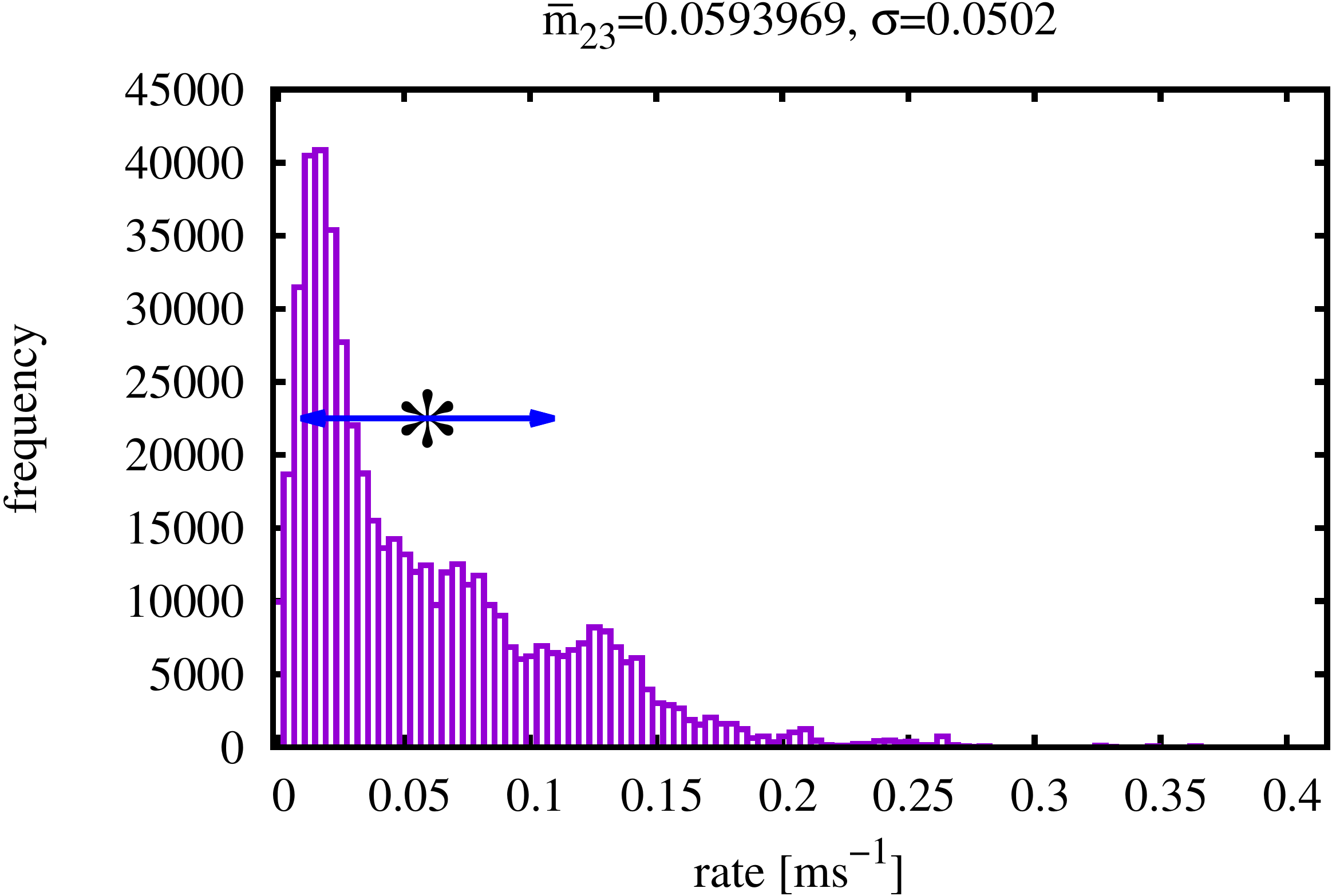}%
  }%
\end{minipage}
\caption{Fitting the stochastic transitions between modes~\Mone\
  and~\Mtwo\ to a 
  model~$\modesM$~\eqref{eq:modesM} with two states representing the
  nearly inactive mode~\Mone\ and one state representing the active
  mode~\Mtwo\ is challenging. The MCMC sampler by \citet{Chr:10a} is
  run with default parameters in order to generate samples from the
  posterior density of the Bayesian model described in
  Section~
  \ref{sec:fit} \cite{Sie:11a,Sie:12b}. The convergence plot in (a)
  shows that over the course of $10^6$ iterations the samples
  generated for the rates that enter the state~$M^2_3$ are
  occasionally swapped. This phenomenon is known as label switching in
  the MCMC literature and is caused by the symmetry of the
  model~$\modesM$~\eqref{eq:modesM}. The marginal histogram of the
  rate~$m_{31}$ is bimodal with two well separated peaks (b) so that
  the effect of label switching can be removed by discarding samples
  with 
  $m_{31}<\unit{0.025}{\milli \reciprocal{\second}}$ (indicated by a
  vertical line). The convergence plot obtained after thresholding is
  shown in (d). Whereas the marginal histogram~(e) indicates
  that~$m_{13}$ is well-constrained, the standard deviation
  of~$m_{23}$ remains high (f). Burn-in for all histograms:
  $2 \cdot 10^5$ iterations.}
  \label{fig:convergence}
\end{figure}

\begin{table}[htbp]
  \centering
  \begin{tabular}{lll}
    Type~I~\ipr\\
    \toprule
 \ca [\micro M]   &    \begin{minipage}[c]{0.25\linewidth}
      \centering
      $m_{13}$\\
      $m_{23}$\\
    \end{minipage} & \begin{minipage}[c]{0.25\linewidth}
      \centering
      $m_{31}$\\
      $m_{32}$\\
    \end{minipage}\\
    \cmidrule(lr){2-3}
    & 0.00236708 $\pm$ 0.000201138 & 0.0545511 $\pm$ 0.00294464 \\
0.01 & 0.069589 $\pm$ 0.0510011 &
    0.00318407 $\pm$ 0.00203495 \\ 
    \midrule
      & 0.0020581 $\pm$	0.000389371 & 0.0619096 $\pm$	0.00894936\\
0.05 & 0.01873 $\pm$	0.0111274 &  0.0101597 $\pm$	0.00693104\\
    \midrule
      & 0.00311881 $\pm$ 0.00027425 &
                                0.0564093 $\pm$ 0.00404197\\
5      & 0.160984 $\pm$ 0.06707 & 0.00472598 $\pm$ 0.00189248 \\
    \bottomrule
  \end{tabular}
  \caption{Type I~\ipr: Mean values and standard deviations for the rate constants of the infinitesimal generator~$\modesM$~\eqref{eq:modesM} in the main text representing the transitions between an inactive mode~\Mone\ and an active mode~\Mtwo. All values are given in transitions per milliseconds [\milli\reciprocal\second].}
  \label{tab:ipr1}
\end{table}

\begin{table}[htbp]
  \centering
  \begin{tabular}{lll}
    Type~II~\ipr\\
    \toprule
 \ca [\micro M]   &    \begin{minipage}[c]{0.25\linewidth}
      \centering
      $m_{13}$\\
      $m_{23}$\\
    \end{minipage} & \begin{minipage}[c]{0.25\linewidth}
      \centering
      $m_{31}$\\
      $m_{32}$\\
    \end{minipage}\\
    \cmidrule(lr){2-3}
    & 0.00134665 $\pm$	0.000250273 & 0.0724154 $\pm$	0.00817478 \\ 
0.01      & 0.0714618 $\pm$ 0.0454381 & 0.0139203 $\pm$	0.00588837\\
    \midrule
   & 0.00435935 $\pm$	0.00027004 & 0.0284326 $\pm$	0.00151654 \\
    5& 0.146953 $\pm$	0.0424637 & 0.00230764 $\pm$	0.000712072\\
    \midrule
      &    \begin{minipage}[c]{0.25\linewidth}
      \centering
      $m_{12}$\\
      $m_{13}$\\
      $m_{24}$\\
    \end{minipage} & \begin{minipage}[c]{0.25\linewidth}
      \centering
      $m_{21}$\\
      $m_{31}$\\
      $m_{42}$\\
    \end{minipage}\\
    \cmidrule(lr){2-3}
      &     0.00112896 $\pm$	0.000402963 & 0.0732717 $\pm$	0.034058 \\
    0.05 & 0.000756359 $\pm$	0.000133923 & 0.083959 $\pm$ 0.0184139 \\
      & 0.0451628 $\pm$	0.0168949 &  0.001816 $\pm$	0.000335203\\
    \bottomrule
  \end{tabular}
  \caption{Type II~\ipr: Mean values and standard deviations for the
    rate constants of the infinitesimal generator~$\modesM$, \eqref{eq:modesM}  and~\eqref{eq:modesMCa50}, respectively, representing the transitions between an inactive mode~\Mone\ and an active mode~\Mtwo. For \unit{0.05}{\micro M} \ca\ an additional state was required for representing the dynamics of the active mode~\Mtwo. All values are given in transitions per milliseconds [\milli\reciprocal\second].}
  \label{tab:ipr2}
\end{table}

\subsection{Step (iii): Parameterising \Qone\ and \Qtwo}
\label{sec:Qi}

In our previous study \citet{Sie:12a} we have already fitted a model
with two states to representative segments of the inactive mode~\Mone\
and a model with four states for representing~\Mtwo,
see~\eqref{eq:Qonetwo} for the form of the infinitesimal
generators~\Qone\ and~\Qtwo. Interestingly, we could show that~\Qone\
and~\Qtwo\ were independent of the concentrations of \ipthree, ATP and
\ca. The parameter values from the Supplementary Material of
\citet{Sie:12a} are reproduced here for convenience
(Table~\ref{tab:intramodal}).

\begin{table}[htbp]
  \centering
  \begin{tabular}{lll}
    \toprule
    & \multicolumn{2}{c}{\Mone}\\
     \cmidrule(lr){2-3}
     &    \begin{minipage}[c]{0.25\linewidth}
      \centering
      $q^1_{12}$\\
      \bigskip
    \end{minipage} & \begin{minipage}[c]{0.25\linewidth}
      \centering
      $q^1_{21}$\\
      \bigskip
    \end{minipage}\\
    \midrule
    Type~I \ipr\ & 11.1 $\cdot 10^{-3} \pm$ 1.01 $\cdot  10^{-3}$ &
    3.33 $\pm$ 0.27\\
    \hline
    Type~II \ipr\ & 4.14 $\cdot 10^{-3} \pm$ 6.7 $\cdot 10^{-4}$ & 3.42
    $\pm$ 0.496 \\
    \midrule

    & \multicolumn{2}{c}{\Mtwo}\\
     \cmidrule(lr){2-3}
    &    \begin{minipage}[c]{0.25\linewidth}
      \centering
      $q^2_{12}$\\
      $q^2_{23}$\\
      $q^2_{24}$\\
      \bigskip
    \end{minipage} & \begin{minipage}[c]{0.25\linewidth}
      \centering
      $q^2_{21}$\\
      $q^2_{32}$\\
      $q^2_{42}$\\
      \bigskip
    \end{minipage}\\
    \midrule
    & 1.24 $\pm$ 0.121 & 0.0879 $\pm$ 0.0117\\
    Type I \ipr\ & 3.32 $\cdot 10^{-3} \pm$ 1.64 $\cdot 10^{-3}$ &
    0.0694 $\pm$ 0.0266\\ 
    & 10.5 $\pm$ 0.0771 & 4.01 $\pm$ 0.0293 \\
    \midrule
    & 1.14 $\pm$ 0.0956 & 0.0958 $\pm$ 0.00945\\
    Type II \ipr\ & 4.75 $\cdot 10^{-3} \pm$ 1.53 $\cdot 10^{-3}$ &
    0.0119 $\pm$ 0.00357\\ 
    & 10.1 $\pm$ 0.0668 & 3.27 $\pm$ 0.0221 \\
    \bottomrule
  \end{tabular}
  \caption{Mean values and standard deviations for rate constants of
    the infinitesimal generators~\eqref{eq:Qonetwo} representing
    opening and closing in~\Mone\ and~\Mtwo. All values are given in
    transitions per milliseconds [\milli\reciprocal\second]. Reproduced from the Supplementary Material of~\citet{Sie:12a}.}
  \label{tab:intramodal}
\end{table}

\subsection{Step (iv): The generator~$\fullM$ of the full model}
\label{sec:fullM}

After the models~$\modesM$, \Qone\ and \Qtwo\ have been obtained, we
finally need to specify the initial distributions~$\pMnull$, $p^1$
and~$p^2$. Consistent with the experimental assumption that recording
of the data was started when the channel has reached steady state we
set~$\pMnull=\statM$, $p^1=\statQ^1$ and~$p^2=\statQ^2$
where~$\statM$, $\statQ^1$ and~$\statQ^2$ are the stationary
distributions of~$\modesM$, \Qone\ and~\Qtwo,
respectively. 
After all components $\hmm$ of our model have been specified, the
infinitesimal generator~$\fullM$ of the full model can be calculated
using~\eqref{eq:fullM}.

\subsection{Results}
\label{sec:comparison}

{Due to the problems with fitting the infinitesimal
  generator~$\modesM$~\eqref{eq:modesM} mentioned in
  Section~\ref{sec:modesM} one may ask if a simpler two-state model
  representing the dynamics of modal gating would be preferable.}
However, the ability of a three-state model to approximate the sojourn
distribution of the nearly inactive mode~\Mone\ more
accurately~(Figure~\ref{fig:IPR1Ca10nMPark}) was found to be crucial
for obtaining a better fit of the closed time distribution in
comparison with the model from
\cite{Sie:12a}~(Figure~\ref{fig:IPR1Ca10nMClosed}). That the model
structure of the hierarchical model proposed here is better able to
capture the properties of the entire time series data seems even more
convincing because it has---unlike the original model from
\citet{Sie:12a}--- been built without directly fitting to the time
series at any step of its construction.

\begin{figure}[htbp]
  \centering
  \subfloat[sojourn time density in \Mone]{%
    \label{fig:IPR1Ca10nMPark}
    \includegraphics[width=0.45\textwidth]{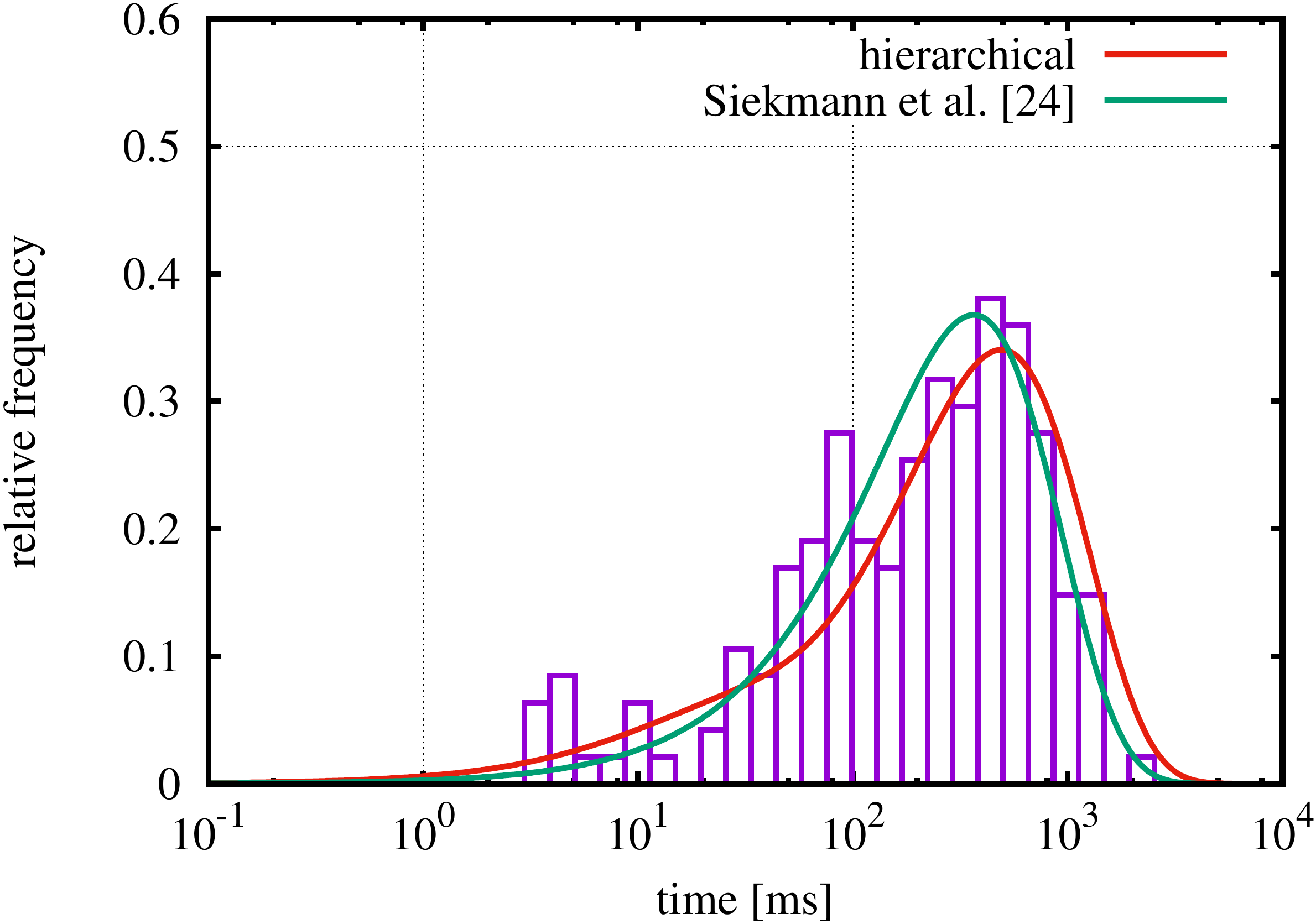}%
  }%
\;
  \subfloat[sojourn time density in \Mtwo]{%
    \label{fig:IPR1Ca10nMDrive}
    \includegraphics[width=0.45\textwidth]{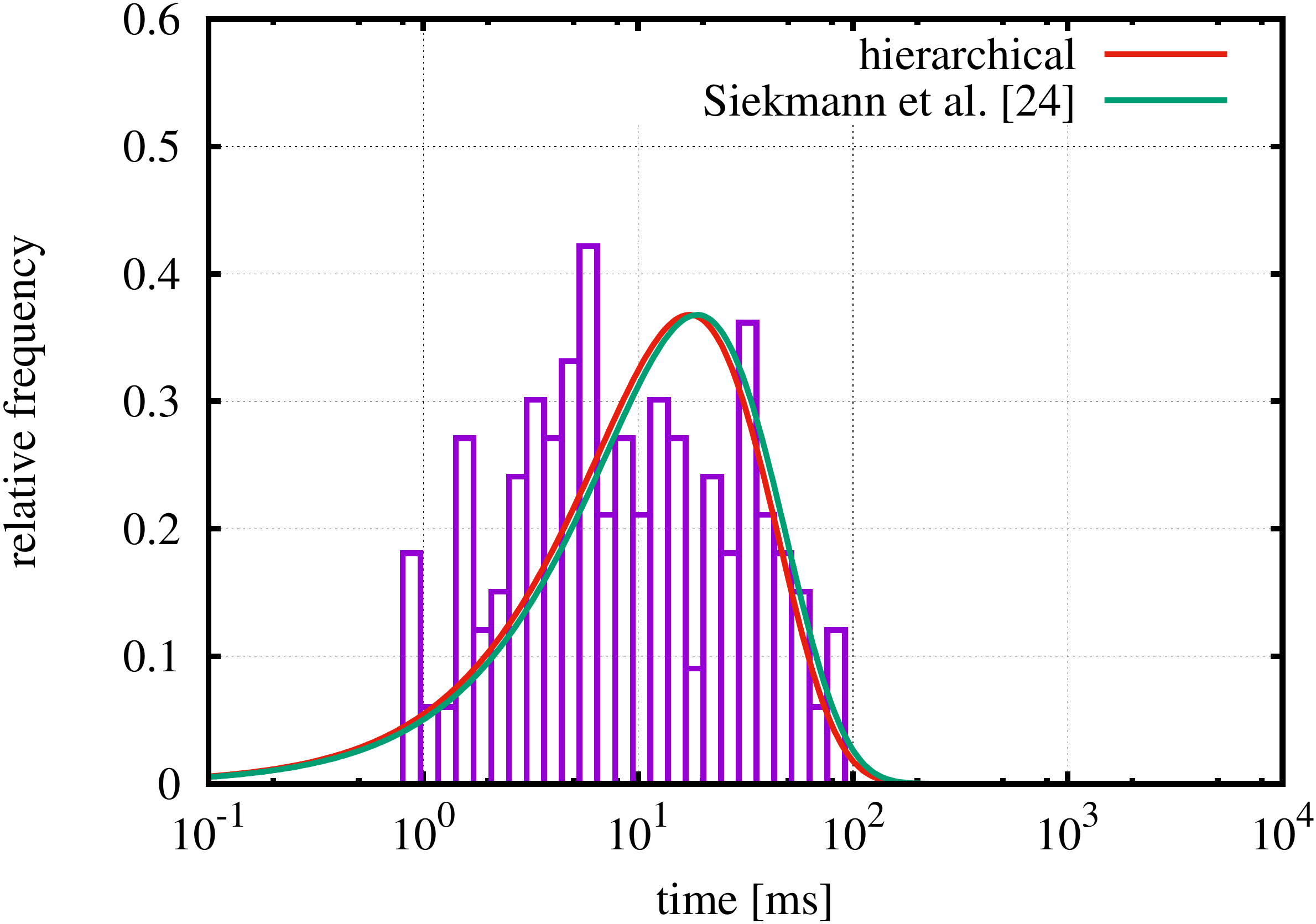}%
  }%
\\
  \subfloat[closed time density]{%
    \label{fig:IPR1Ca10nMClosed}
    \includegraphics[width=0.45\textwidth]{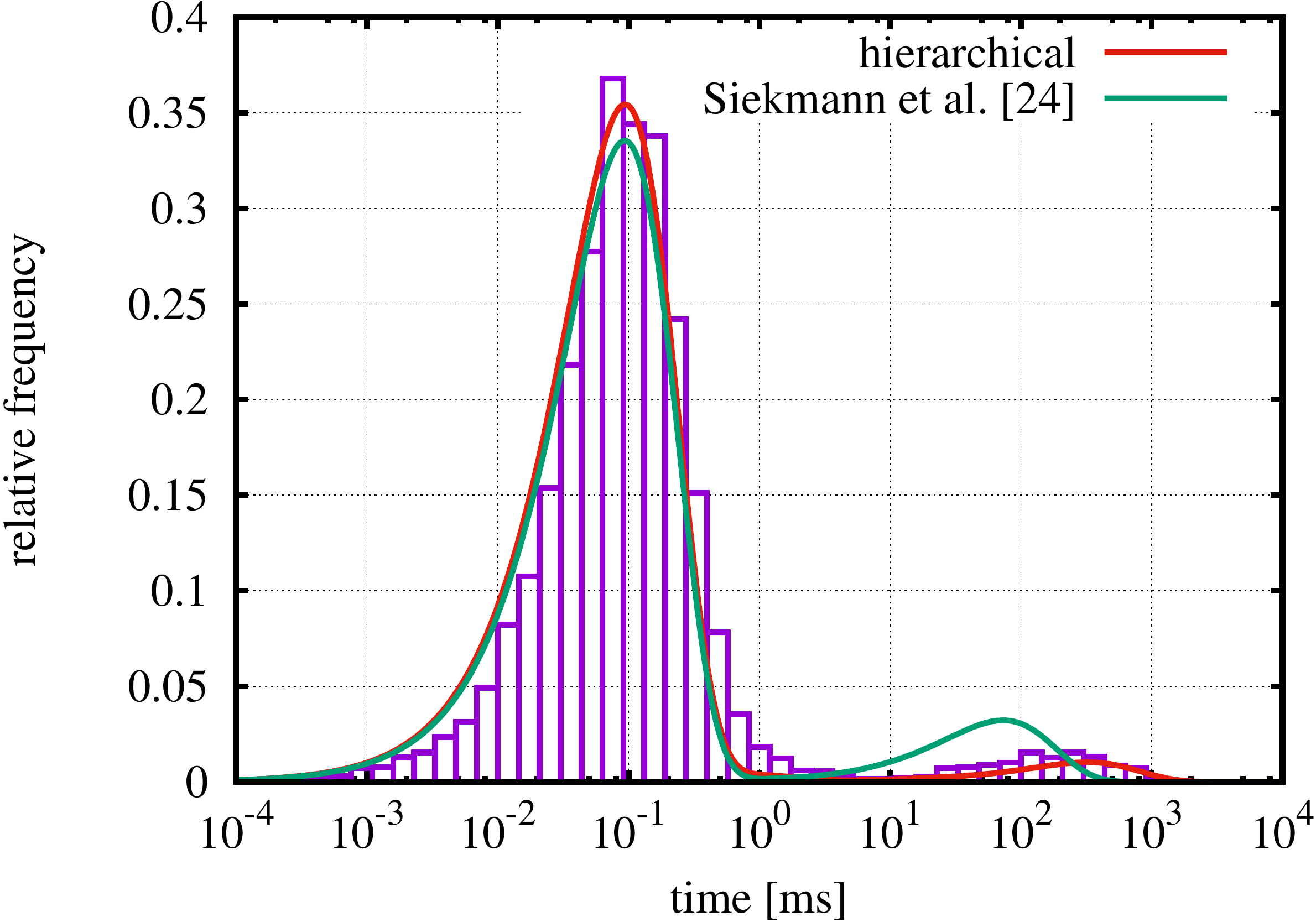}%
  }%
\;
  \subfloat[open time density]{%
    \label{fig:IPR1Ca10nMOpen}
    \includegraphics[width=0.45\textwidth]{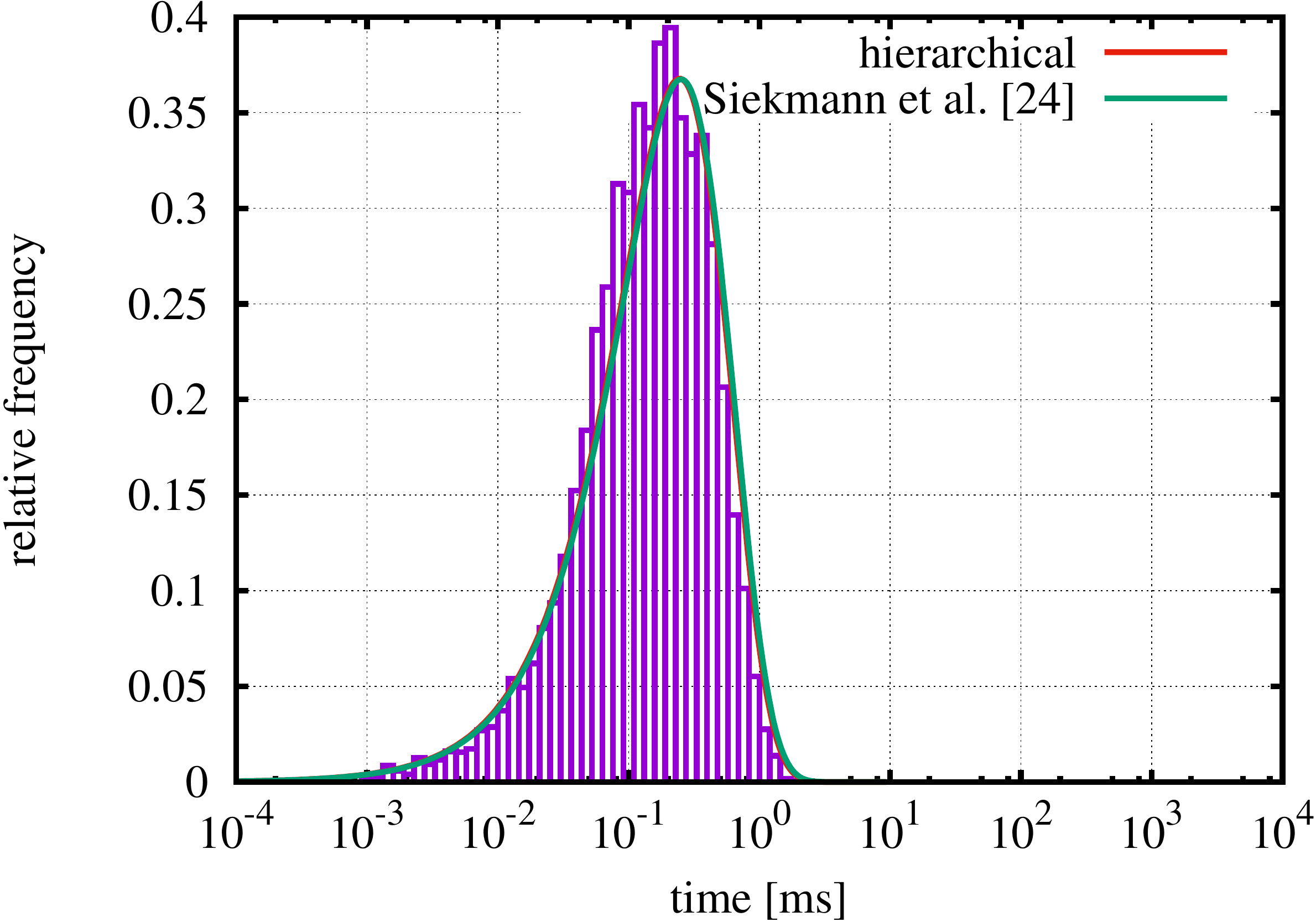}%
  }%
  \caption{The model from \citet{Sie:12a} and the new hierarchical
    model are compared for a data set from~\iprOne~for
    \unit{10}{\micro M} \ipthree, \unit{5}{\milli M} ATP
    and~\unit{0.01}{\micro M} \ca. (a) shows that the fit of the new
    model to the empirical sojourn time density in mode~\Mone~(shown
    in red) is slightly improved in comparison with the original
    model~(shown in green). This improved fit of the modal kinetics
    clearly improves the fit to the closed time densities shown
    in~(c).}
  \label{fig:compParkDrive}
\end{figure}

In Figure~\ref{fig:modalclosed} we show that the bimodal closed time
distribution observed for some combinations of ligand concentrations
arises due to the mixing of the closed time distributions within
nearly inactive mode~\Mone\ and active mode~\Mtwo\ both of which only
have one distinct maximum.

\begin{figure}[htbp]
  \centering
  \includegraphics[width=0.6\textwidth]{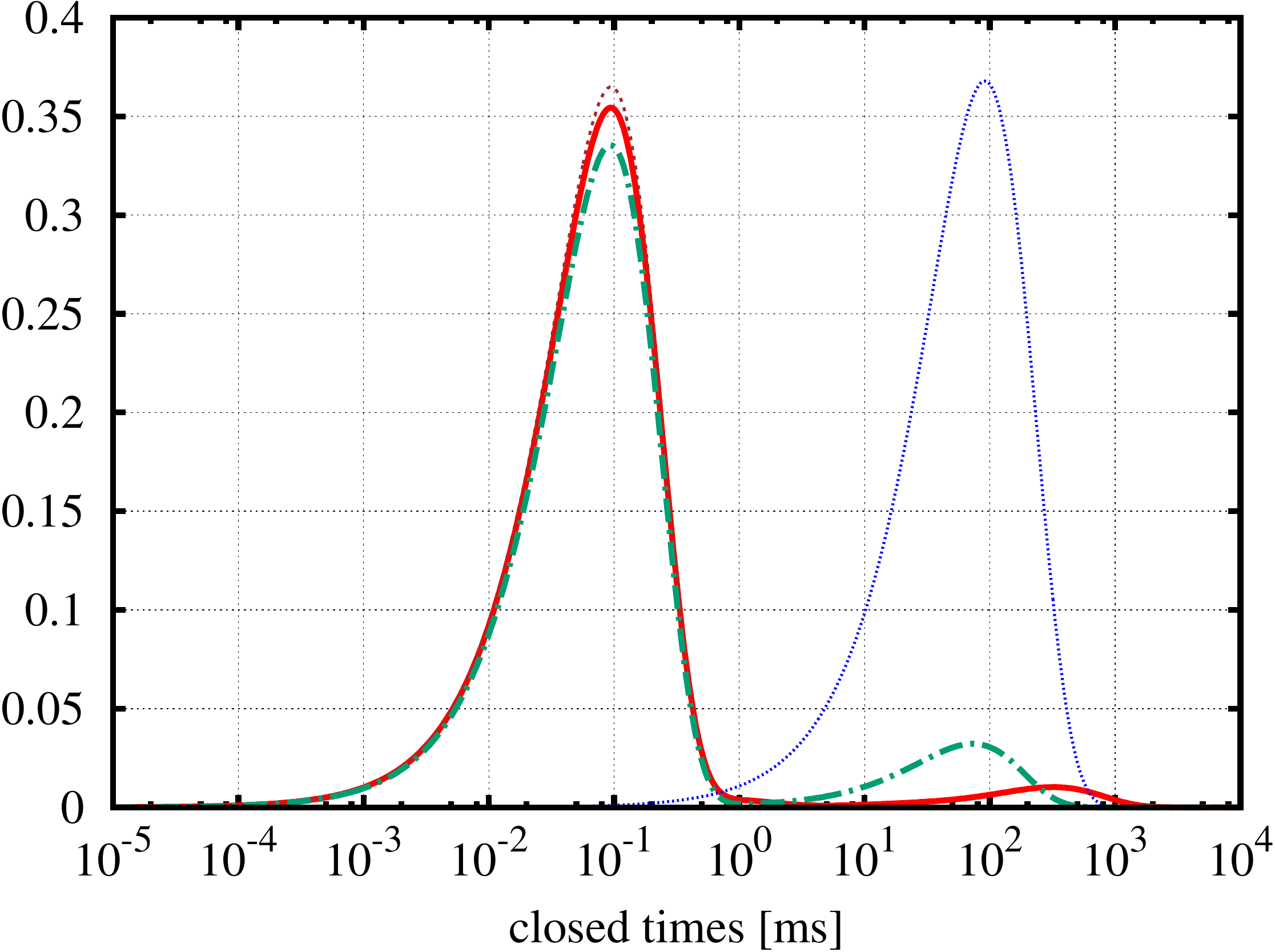}%
  \caption{The bimodal closed time distribution (red, solid) observed
    for~\iprOne~for \unit{10}{\micro M} \ipthree, \unit{5}{\milli M}
    ATP and~\unit{0.01}{\micro M} \ca\
    (Figure~\ref{fig:IPR1Ca10nMClosed}) arises due to mixing of the
    closed time distribution of the active mode \Mtwo\ (brown, dashed)
    and of the inactive mode~\Mone\ (blue, dotted). Note that the mode
    of the closed time distribution in \Mone\ (blue, dotted) is
    shifted from about \unit{100}{\milli \second} to \unit{300}{\milli
      \second} in the closed time distribution of the full model. By
    comparison with the closed time distribution of the model from
    \citet{Sie:12a} (green, dash-dotted) it shows that this model is
    incapable of shifting the mode of the closed distribution in the
    nearly inactive mode \Mone\ to the right.}
  \label{fig:modalclosed}
\end{figure}

Stronger differences between both models are observed for a data set
collected from~\iprTwo~for \unit{10}{\micro M} \ipthree,
\unit{5}{\milli M} ATP and~\unit{0.05}{\micro M} \ca. For this
experimental condition, the effect of modal gating can be observed
without statistical
analysis~(Figure~\ref{fig:IPR2Ca50nMdata}). Figure~\ref{fig:Ca50nMParkDrive}
shows that both modes~\Mone\ and~\Mtwo\ exhibit a widespread
distribution of sojourn times which can only approximately be captured
by a four-state model with two states each for both~\Mone\
and~\Mtwo. Whereas the new hierarchical model can approximate the
empirical distributions of both modes relatively well, the model from
\citet{Sie:12a} fails due to the fact that only one characteristic
sojourn time for each mode can be captured by the pair of transition
rates accounting for modal gating in this
model~(Figure~\ref{fig:Ca50nMParkDrive}).

\begin{figure}[htbp]
  \centering
  \subfloat[sojourn time distribution in \Mone]{%
    \includegraphics[width=0.45\textwidth]{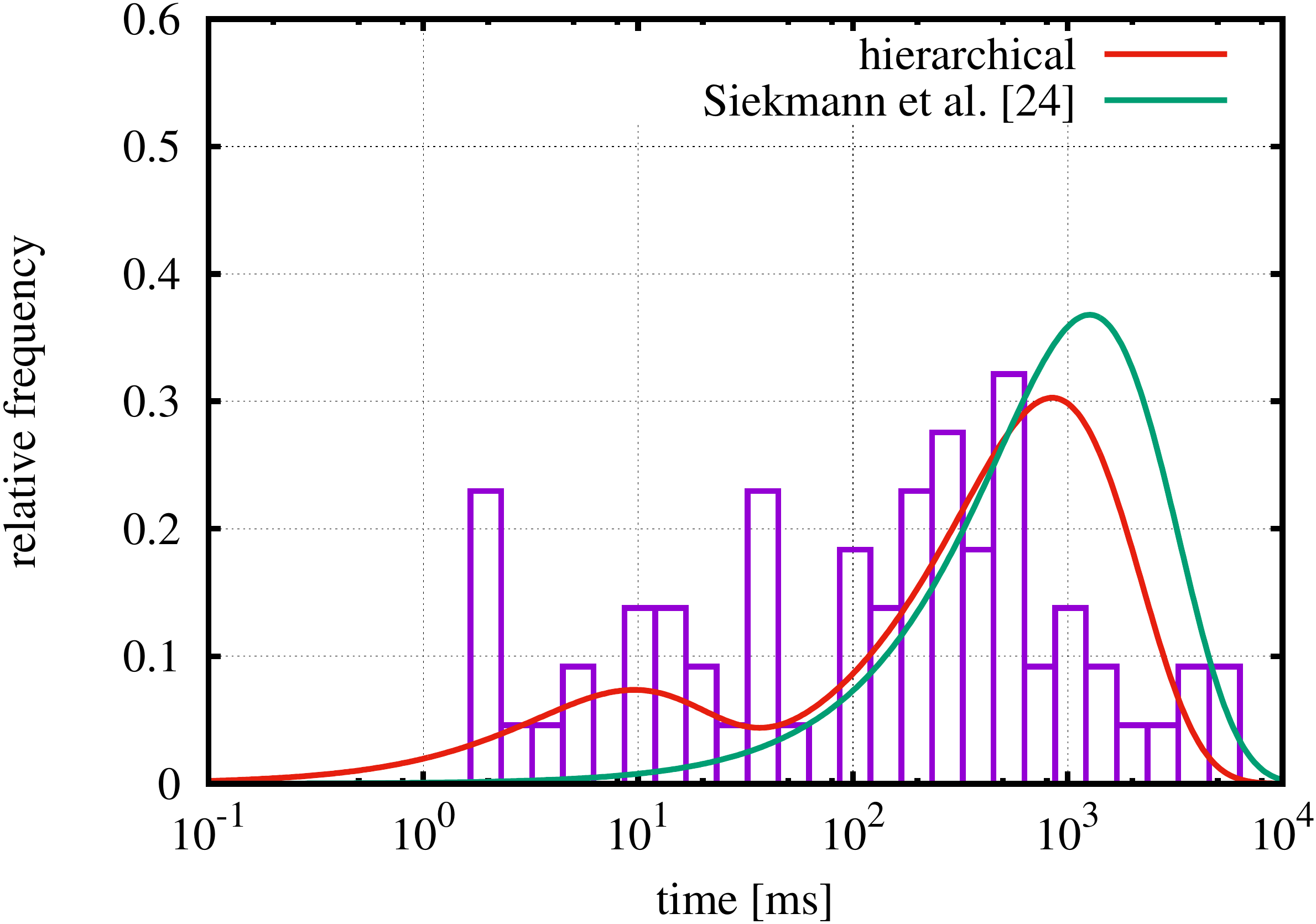}%
  }%
\;
  \subfloat[sojourn time distribution in \Mtwo]{%
    \includegraphics[width=0.45\textwidth]{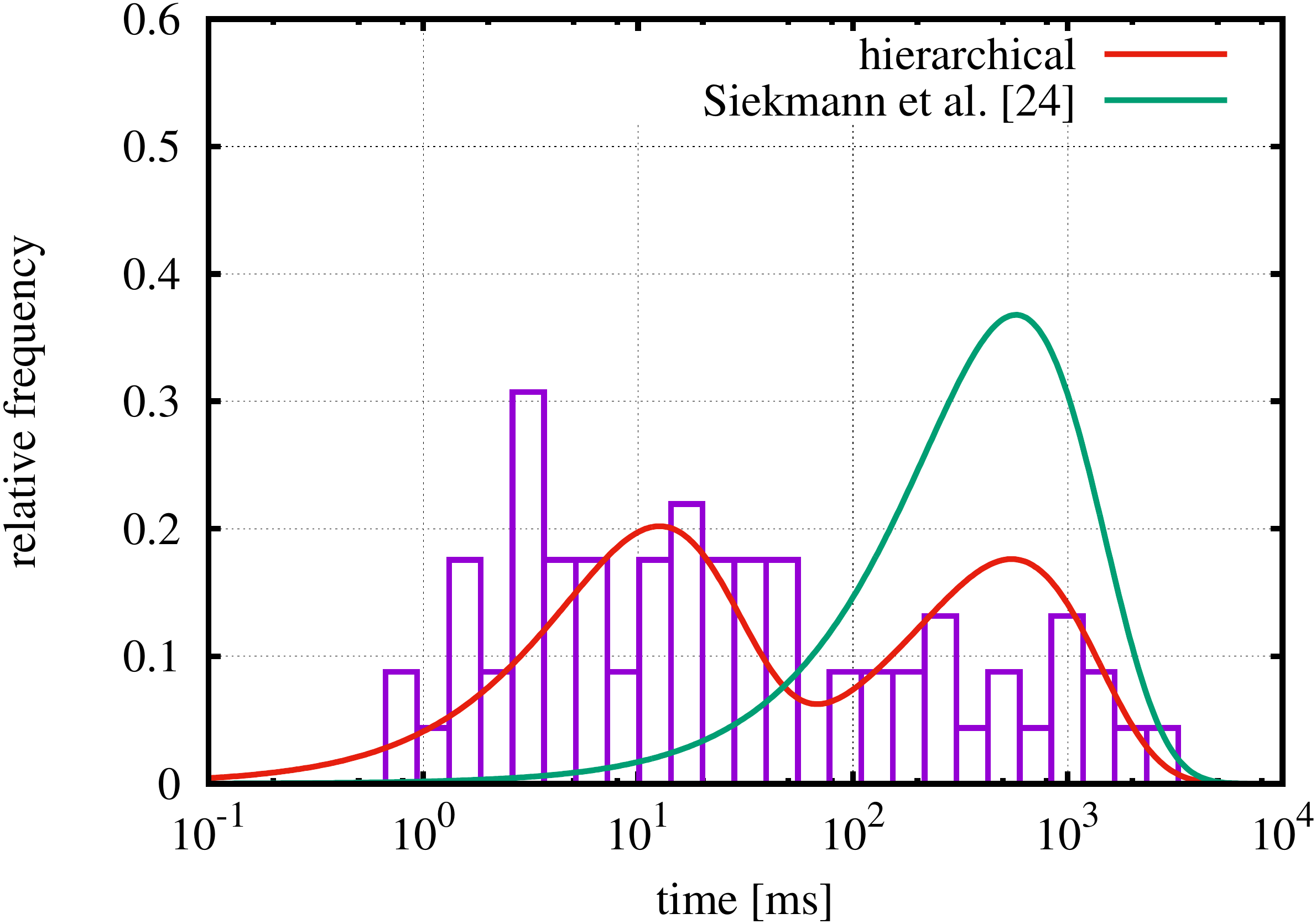}%
  }%
  \caption{Empirical sojourn time distributions for both modes~\Mone\
    and~\Mtwo\ for~\iprTwo~for~for \unit{10}{\micro M} \ipthree,
    \unit{5}{\milli M} ATP and~\unit{0.05}{\micro M} \ca. Whereas the
    hierarchical model can resolve (by using a four-state model) the
    widespread distributions of both~\Mone\ and~\Mtwo, the model from
    \citet{Sie:12a} can only capture one characteristic sojourn time
    due to the fact that only one pair of transition rates has been
    used to connect the submodels for mode~\Mone\ and~\Mtwo.}
  \label{fig:Ca50nMParkDrive}
\end{figure}

Due to the failure to account for the modal sojourn time
distributions, we expect the model from \citet{Sie:12a} to reproduce
the kinetics observed in the data much less accurately than the new
hierarchical model. In order to illustrate this we simulated both the
\citet{Sie:12a} model~(Figure~\ref{fig:IPR2Ca50nMold}) and the new
model~(Figure~\ref{fig:IPR2Ca50nMnew}). The sample path was plotted in
blue when the channel was in mode~\Mone\ whereas it was plotted in
brown when the channel was in mode~\Mtwo. The same colours were used
for colouring the data~(Figure~\ref{fig:IPR2Ca50nMdata}) based on the
results of the statistical analysis from \citet{Sie:14a}. The
comparison shows that mode switching happens much more frequently in
the model from \citet{Sie:12a} than observed in the data and the
proportion of relatively long sojourns is increased with respect to
the data. The frequency of mode switching and the widespread
distribution of sojourn lengths is better approximated by the
hierarchical model. The burst of activity observed in the data
starting after approximately~\unit{45}{\second} is not captured by
either model. This would require separate statistical analysis of the
trace before and after~$t=\unit{45}{\second}$, the observed change of
activity.

\begin{figure}[htbp]
  \centering
  \subfloat[data]{%
    \label{fig:IPR2Ca50nMdata}
    \includegraphics[width=0.45\textwidth]{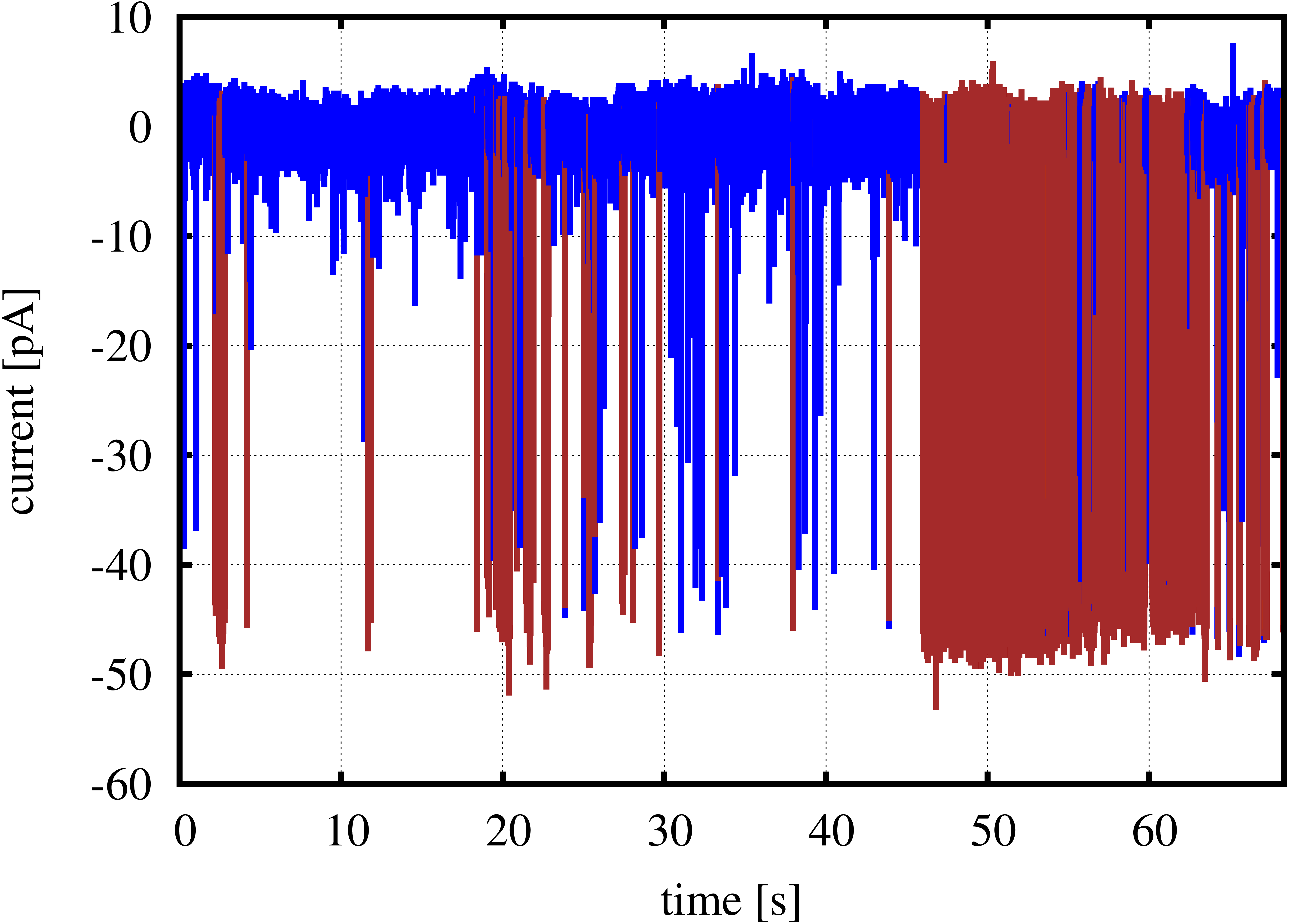}%
  }%
\\
  \subfloat[Hierarchical model]{%
    \label{fig:IPR2Ca50nMnew}
    \includegraphics[width=0.45\textwidth]{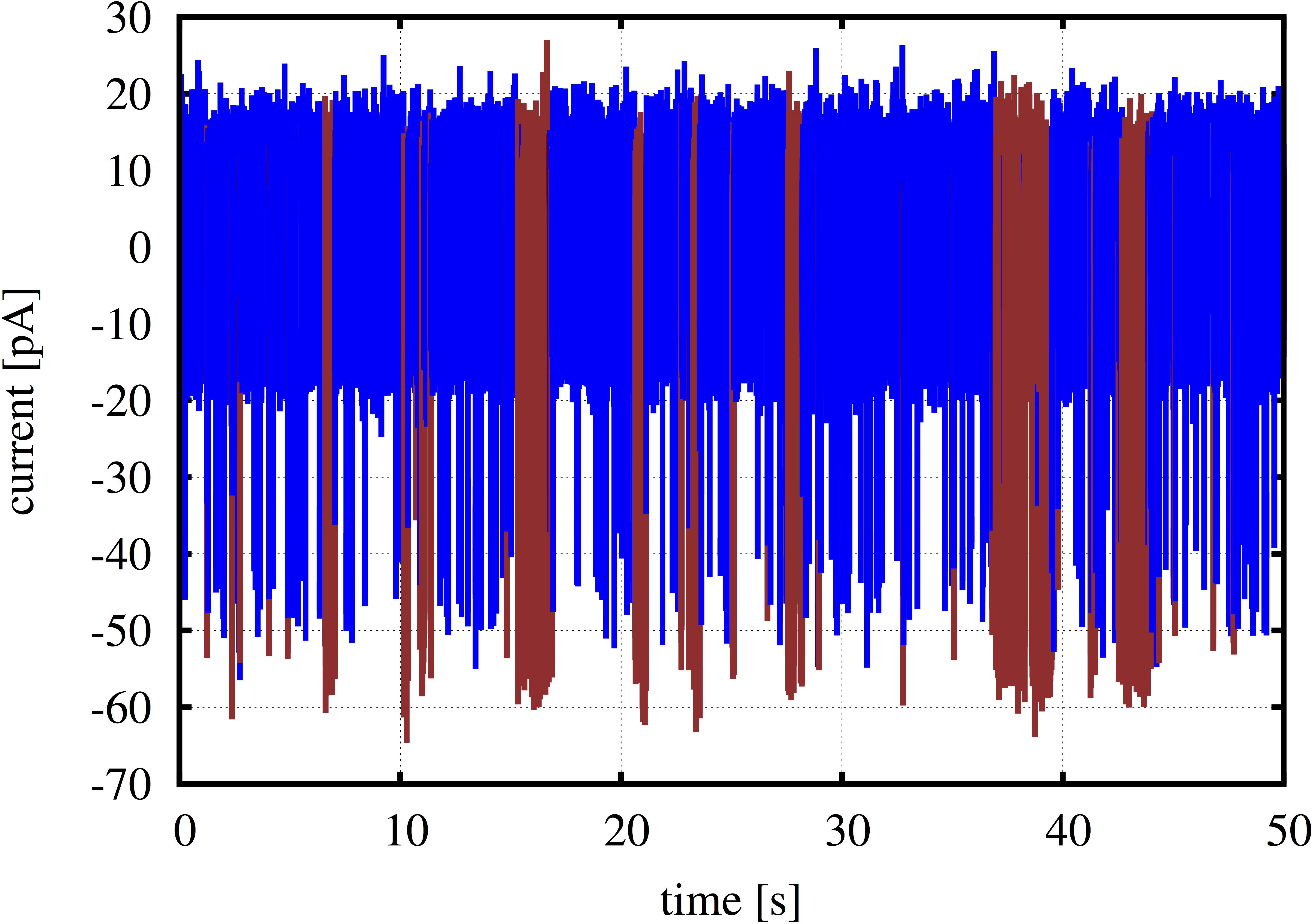}%
  }%
  \;
  \subfloat[\citet{Sie:12a} model]{%
    \label{fig:IPR2Ca50nMold}
    \includegraphics[width=0.45\textwidth]{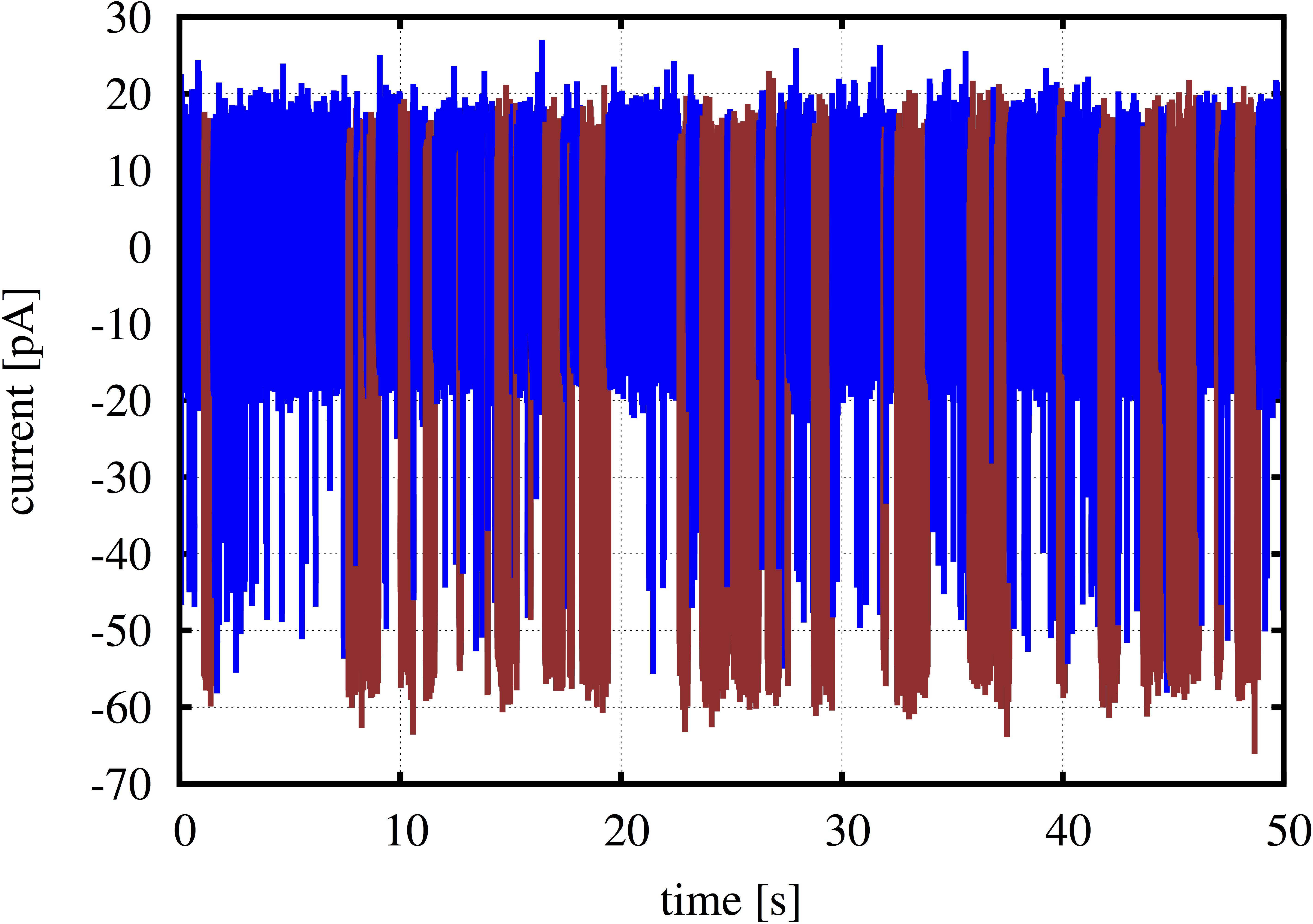}%
  }%
  \caption{The new hierarchical model represents the kinetics more
    accurately than the model from \citet{Sie:12a}. Panel (a) shows
    data from type~2 \ipr\ recorded at \unit{10}{\mu M} \ipthree,
    \unit{5}{\milli M} ATP and \unit{0.05}{\micro M} \ca. The colour
    of the line shows in which mode the channel is for a given point
    in time as inferred by the method of \citet{Sie:14a}. Blue
    indicates the nearly inactive mode~\Mone\ whereas brown indicates
    the active mode~\Mtwo. Similar to the data, the time spent in each
    mode spans a wide range of time scales and the channel alternates
    between the modes relatively infrequently whereas the model from
    \citet{Sie:12a} switches too often.}
  \label{fig:compkinetics}
\end{figure}

\newpage

\section{Mathematical analysis of the hierarchical Markov model}
\label{sec:analysis}

{In the previous section we demonstrated that the hierarchical Markov
  model introduced in Section~\ref{sec:methods} provides a
  statistically efficient framework for systematically building models
  for modal gating. Now, we focus on some interesting aspects of the
  mathematical structure of the hierarchical Markov model and show
  that many important properties of the infinitesimal
  generator~$\fullM$ of the full model can be derived from the
  components~$\hmm$ of the model.}

{In Section~\ref{sec:eigenvalues} we calculate the
  eigenvalues of~$\fullM$. The spectrum of~$\fullM$ consists of two
  parts: the eigenvalues of~$\modesM$ and a subset of the eigenvalues
  of the blocks~$\fullM^{i,i}=\modesM^{i,i} \oplus \Qi$. But whereas
  the eigenvalues of the submatrices~$\modesM^{i,i}$ appear in the
  spectrum of the submatrices~$\fullM^{i,i}$, they are not eigenvalues
  of the full model~$\fullM$.}

{From a modelling point of view it is an important question if
  properties of the components~$\hmm$ are preserved when they are
  combined in the full model. In Section~\ref{sec:modalsojourns} we
  demonstrate that the sojourn time distribution in the states
  representing a particular mode in the model~$\modesM$ is preserved
  for the analogous distribution calculated for the augmented state
  space of~$\fullM$.}

{When the initial distributions~$p^i$ coincide with the
  stationary distributions, $p^i=\statQi$, we calculate the full time-dependent solution and the stationary distribution of~$\fullM$ from the components~$\hmm$ of the hierarchical Markov model (Section~\ref{sec:full}).}




\subsection{Eigenvalues}
\label{sec:eigenvalues}

Before we calculate the eigenvalues for general infinitesimal
generators~$\fullM$ of the full model we remark that in most cases
relevant for ion channel modelling we may assume that the
matrices~$\modesM$ and~$\Qi$ appearing in our model are
diagonalisable---this is implied by the so-called detailed balance
conditions:

\begin{equation}
  \label{eq:detbal}
  \pi^iq_{ij} = \pi^j q_{ji},
\end{equation}
where~$\pi$ is the stationary distribution of an infinitesimal
generator~$Q=(q_{ij})$. A matrix $Q=(q_{ij})$ with~\eqref{eq:detbal}
is diagonalisable with real eigenvalues because by choosing the
transformation matrix $\diag(\statQ)^{1/2}$ it is similar to a
symmetric real matrix. Detailed balance is usually assumed to hold for
ion channel models because it can be related to thermodynamic
reversibility of the transitions between different states in the
model. Note that~\eqref{eq:detbal} holds automatically if the
adjacency graph of the states of a Markov model is acyclic. This
follows from Kolmogorov's criterion \cite{Kol:36a}, see theorem 1.8 of
\citet{Kel:11a} for a more recent statement of the continuous-time
version.  Thus, in particular, all infinitesimal generators~$\modesM$
and~$\Qi$ considered in this article satisfy detailed balance.

{
  \begin{proposition}[Eigenvalues and eigenvectors of~$\fullM$
    assuming detailed balance]
    We assume that the matrices~$\modesM$ and~$\Qi$ of a hierarchical
    Markov model fulfil the detailed balance
    conditions~\eqref{eq:detbal}.
  \begin{enumerate}
  \item Let~$\fullEv$ be an eigenvalue of the matrix~$\modesM$ and
    $v_{\boldsymbol{m}}^T$ a right eigenvector associated with~$\fullEv$.
    Then~$\fullEv$ is also an eigenvalue of the full model~$\fullM$
    with associated right
    eigenvector~$v_{\boldsymbol{m}}^T \otimes u_{\boldsymbol{n}}^T$ where
    $u_{\boldsymbol{n}}^T$ is a vector of~$|\boldsymbol{n}|$ ones.
  \item Moreover, all~$\nu=\modesEv+\lambda$, where~$\modesEv$ is an
    eigenvalue of~$\modesM^{i,i}$ and~$\lambda \neq 0$ is an
    eigenvalue of~$\Qi$, are eigenvalues of the full
    model~$\fullM$. If~$\tilde{w}^i$ is a left eigenvector of the
    submatrix~$\fullM^{i,i}$ associated with the eigenvalue~$\nu$,
    $w_{\boldsymbol{m}}=(0; \dots; 0; \tilde{w}^i; 0; \dots; 0)$
    with~$w(i)=\tilde{w}^i$ and~$w(j)=0$, $i\neq j$ is a left
    eigenvector of~$\fullM$ associated with~$\nu$.
  \end{enumerate}
\end{proposition}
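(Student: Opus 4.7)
The plan is to verify both claims by direct computation, exploiting the block structure of $\fullM$ from Definition~\ref{def:components}: diagonal blocks $\fullM^{i,i}=\modesM^{i,i}\oplus Q^i$ carry intra-modal dynamics, whereas off-diagonal blocks $\fullM^{i,j}=\modesM^{i,j}\otimes P^{i,j}$ route between modes and are tailored (via the stochastic weights $p^j$ and the all-ones columns $u_{n_i}^T$) precisely so that the eigenstructure of $\modesM$ lifts cleanly and so that the left eigenstructure of the diagonal blocks decouples from the off-diagonal blocks.

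For part (1), I would take the candidate right eigenvector $w=(v^1\otimes u_{n_1}^T;\ldots;v^{\nM}\otimes u_{n_{\nM}}^T)$ and compute the $i$-th block of $\fullM w$ block by block. Two algebraic facts drive everything: (a) each $Q^i$ is an infinitesimal generator, so $Q^iu_{n_i}^T=0$; and (b) each $p^j$ is stochastic, so $P^{i,j}u_{n_j}^T=u_{n_i}^T(p^ju_{n_j}^T)=u_{n_i}^T$. By (a) and the definition of the Kronecker sum, the diagonal contribution reduces to $(\modesM^{i,i}v^i)\otimes u_{n_i}^T$. By the mixed-product rule together with (b), each off-diagonal contribution reduces to $(\modesM^{i,j}v^j)\otimes u_{n_i}^T$. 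Summing over $j$ factors the common $u_{n_i}^T$ out and yields $(\modesM v_{\boldsymbol{m}}^T)^i\otimes u_{n_i}^T=\fullEv v^i\otimes u_{n_i}^T$, which is exactly the $i$-th block of $\fullEv w$.

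For part (2), first invoke the standard Kronecker-sum eigenstructure (guaranteed here because the detailed-balance assumption forces diagonalisability of $\modesM$ and the $\Qi$): the eigenvalue $\nu=\modesEv+\lambda$ of $\fullM^{i,i}$ admits a left eigenvector of factored form $\tilde w^i=x\otimes y$ with $x\modesM^{i,i}=\modesEv x$ and $yQ^i=\lambda y$. Embedding this block into $w=(0;\ldots;0;\tilde w^i;0;\ldots;0)$, the $j$-th block of $w\fullM$ equals $\tilde w^i\fullM^{i,j}$. For $j=i$ this is $\nu\tilde w^i$ by construction. For $j\ne i$ the mixed-product rule gives $(x\modesM^{i,j})\otimes(yP^{i,j})$, and the genuinely delicate step is showing that $yP^{i,j}=(yu_{n_i}^T)p^j$ vanishes: right-multiplying $yQ^i=\lambda y$ by $u_{n_i}^T$ and invoking fact (a) gives $\lambda(yu_{n_i}^T)=0$, so the hypothesis $\lambda\ne 0$ is exactly what forces $yu_{n_i}^T=0$.

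The main obstacle, and in fact the conceptual heart of the statement, is precisely this $\lambda\ne 0$ hypothesis in part (2): it is what makes the block-embedded vector a left eigenvector of the \emph{full} generator rather than merely of the diagonal block. The zero eigenvalues of the individual $Q^i$ must be excluded because their left eigenvectors are stationary distributions, which do not annihilate $u_{n_i}^T$; these ``missing'' eigenvalues are instead accounted for by part (1), which lifts the entire spectrum of $\modesM$ (including its zero eigenvalue associated with stationarity), and a quick count $\sum_i m_i+\sum_i m_i(n_i-1)=\sum_i m_in_i=\dim\fullM$ confirms that the two parts together exhaust the spectrum. Beyond this observation the remaining work is routine bookkeeping: tracking which tensor products are ordinary Kronecker products versus the partitioned $\parttens$ of Definition~\ref{def:parttens}, and applying the mixed-product rule $(A\otimes B)(C\otimes D)=(AC)\otimes(BD)$ consistently.
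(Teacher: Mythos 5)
Your proposal is correct and follows essentially the same route as the paper's proof: part (1) is the identical block-by-block computation using $Q^iu_{n_i}^T=0$ and $P^{i,j}u_{n_j}^T=u_{n_i}^T$, and part (2) uses the same factorisation $\tilde w^i=x\otimes y$ (available by diagonalisability under detailed balance) and kills the off-diagonal blocks by showing the left eigenvector for $\lambda\neq 0$ annihilates $u_{n_i}^T$. Your justification of that last step --- right-multiplying $yQ^i=\lambda y$ by $u_{n_i}^T$ to get $\lambda(yu_{n_i}^T)=0$ --- is a slightly more direct computation than the paper's appeal to orthogonality with the right nullspace, and your closing dimension count is a nice extra observation not in the paper's proof.
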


\begin{proof}
  Detailed balance implies that~$\modesM$ and the~$\Qi$ are diagonalisable with real eigenvalues. In particular, all matrices have full sets of eigenvectors. This enables us to construct eigenvectors of the infinitesimal generator~$\fullM$ of the full model from the eigenvectors of~$\modesM$ and the~$\Qi$.

  \begin{enumerate}
  \item We need to show
    that~$M (v_{\boldsymbol{m}}^T \parttens u_{\boldsymbol{n}}^T)=\fullEv
    (v_{\boldsymbol{m}}^T \parttens u_{\boldsymbol{n}}^T)$.
    Let $[M (v_{\boldsymbol{m}}^T \parttens u_{\boldsymbol{n}}^T)]^{i}$ denote
    the~$i$-th component of the partitioned vector. Here,
    $v_{\boldsymbol{m}}^T \parttens u_{\boldsymbol{n}}^T$ is a tensor product
    that is consistent with the partitions~$\boldsymbol{m}$
    and~$\boldsymbol{n}$ as in~\eqref{eq:tensorpartition}
    (Definition~\ref{def:partvector}). We calculate:

    \[
    [M (v_{\boldsymbol{m}}^T \parttens
    u_{\boldsymbol{n}}^T)]^{i}=(\modesM^{i,i}\oplus \Qi)( (v^i)^T \otimes
    u_{n_i}^T) +\sum_{k\neq i} (\modesM^{i,k}\otimes
    P^{i,k})((v^k)^T\otimes u_{n_k}^T).
      \]
Using the compatibility condition of matrix multiplication and tensor product~\eqref{eq:compMatrixTensor} we calculate:

\begin{align*}
  [M (v_{\boldsymbol{m}}^T \parttens
  u_{\boldsymbol{n}}^T)]^{i}&=(\modesM^{i,i} (v^i)^T \otimes u_{n_i}^T + (v^i)^T
                        \otimes \Qi u_{n_i}^T) +\sum_{k\neq i}
                        (\modesM^{i,k} (v^k)^T \otimes P^{i,k}
                        u_{n_k}^T).
\end{align*}
Noting that~$\Qi u_{n_i}^T=0$ and~$P^{i,k} u_{n_k}^T= u_{n_i}^T$ we
finally get:

\begin{align*}
  [M (v_{\boldsymbol{m}}^T \parttens
  u_{\boldsymbol{n}}^T)]^{i}&=\sum_{k=1}^{\nM} \modesM^{i,k} (v^k)^T \otimes
                          u_{n_i}^T=\fullEv ( (v^i)^T \otimes u_{n_i}^T). 
\end{align*}
Because this holds for all blocks we obtain the desired result.

\item All except for the $i$-th block of~$w$ are zero, so we get:

  \begin{align*}
    w \fullM & = (\tilde{w}^i[\modesM^{i,1}
      \otimes P^{1,i}]; \dots; \tilde{w}^i[\modesM^{i,i} \oplus Q^i]; \dots;  \tilde{w}^i [\modesM^{i,\nM}
      \otimes P^{\nM,i}] ).
\end{align*}

Because~$\tilde{w}^i$ is an eigenvector of~$\modesM^{i,i} \oplus Q^i$
we know that~$\tilde{w}^i (\modesM^{i,i} \oplus Q^i)=\nu \tilde{w}^i$.
For~$w$ to be an eigenvector, it remains to be shown that all other
blocks vanish. Let $u$ be a left eigenvector
of~$\modesM^{i,i}$ 
associated with the eigenvalue~$\modesEv$ and $v$ a left eigenvector
of~$Q^i$ associated with the eigenvalue~$\lambda$. Then~$\tilde{w}^i$
can be written as $\tilde{w}^i=u \otimes v$ according
to~\eqref{eq:ev}. Substituting this
and~$P^{i,k}=p^k\otimes u_{n_i}^T$, $k\neq i$, we calculate:

\begin{align}
\label{eq:vanish}
  (  u \otimes v)[\modesM^{1,k} \otimes p^k \otimes u_{n_i}^T] &=u (\modesM^{1,k} \otimes p^k) \otimes v u_{n_i}^T. 
\end{align}

The term~$v u_{n_i}^T$ is the standard scalar
product~$\langle v^T, u_{n_i}^T \rangle$ of the vectors~$v^T$
and~$u_{n_i}^T$. Because the row sums of~$\Qi$ are zero, $u_{n_i}^T$
is in the right nullspace of~$\Qi$. By assumption, $v$ is an
eigenvector associated to any eigenvalue~$\lambda\neq 0$. This means
that~$v$ is not in the left nullspace of~$\Qi$, so it must be
orthogonal to any vector in the right nullspace. It follows
that~\eqref{eq:vanish} vanishes as required.
\end{enumerate}
\end{proof}
}

{For the general case where the infinitesimal generators of
  the model~$\modesM$ and the submatrices~$\fullM^{i,i}$ may not
  necessarily be diagonalisable we need the Schur decomposition
  (Proposition~\ref{the:schur}). The Schur decomposition ensures that
  the matrix~$\fullM$ can be transformed to an upper-triangular matrix
  by a unitary matrix. In the following we construct a unitary
  matrix~$S$ from the components~$\hmm$ of our model.

\begin{lemma}[Unitary matrix~$S$]
\label{the:unitary}
For the components $\hmm$ of a hierarchical Markov model, let 

\[
T_{\modesM} =\Theta^* \modesM \Theta, \quad T_{\modesM^{i,i} \oplus \Qi} =(V_i\otimes W_i)^* \modesM^{i,i} \oplus \Qi (V_i\otimes W_i),
\]
be the Schur decompositions of~$\modesM$
and~$\modesM^{i,i} \oplus \Qi$. Let
$\bar{u}_{n_i}^T=1/\sqrt{n_i} u_{n_i}^T$ be the vectors obtained by
normalising the vectors of ones~$u_{n_i}^T$.

\begin{enumerate}
\item The matrices~$W_i$ may be chosen so that they have the
  form~$W_i=\left(\bar{u}_{n_i}^T \vline \tilde{W}_i\right)$
  with~$\tilde{W}_i\in\mathbb{C}^{n_i \times (n_i-1)}$.
\item 
  Let 
\[
\Theta =
\begin{pmatrix}
  \Theta^1\\
  \hline
\vdots\\
  \Theta^{\nM}
\end{pmatrix}
\]
be row-partitioned according to the block structure of~$\modesM$
from~\eqref{eq:Mmodel}. Then the matrix

  \begin{equation}
    \label{eq:S}
    S =
        \begin{pmatrix}
          \Theta^{1} \otimes \bar{u}_{n_1}^T & \vline & V_1 \otimes \tilde{W}_1  & \vline
          &0 & \vline &
          \dots &\vline & 0 \\
          \hline
          \vdots &\vline& 0&\vline & V_2 \otimes \tilde{W}_2 & \vline & \dots &\vline&\vdots\\
          \vdots & 
& \vdots  & 
 &  & 
&
          \ddots & 
& 0\\
          \Theta^{\nM} \otimes \bar{u}_{n_{\nM}} ^T & \vline & 0  & \vline
          &\dots & \vline &
          \dots &\vline & V_{\nM} \otimes \tilde{W}_{\nM}
        \end{pmatrix}
  \end{equation}
is unitary.
\end{enumerate}

\end{lemma}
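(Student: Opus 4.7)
The plan is to prove the two statements separately: part~(1) by an eigenvalue-ordering argument for the Schur decomposition of~$\Qi$, and part~(2) by a direct block computation verifying $S^{*}S=I_{|\boldsymbol{m}\cdot\boldsymbol{n}|}$. Unitarity will then follow because $S$ is square---the first block column of $S$ contributes $|\boldsymbol{m}|$ columns and the remaining $\nM$ block columns contribute $\sum_{i=1}^{\nM} m_i(n_i-1)$ columns, giving $|\boldsymbol{m}|+\sum_i m_i(n_i-1)=|\boldsymbol{m}\cdot\boldsymbol{n}|$, which matches the row count.

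\textbf{Part~(1).} Because $\Qi$ is an infinitesimal generator, each of its rows sums to zero, so $\Qi u_{n_i}^{T}=0$, and the normalised column vector $\bar{u}_{n_i}^{T}$ is a unit eigenvector of~$\Qi$ associated to the eigenvalue~$0$. Since the eigenvalues on the diagonal of the triangular factor of a Schur decomposition may be placed in any prescribed order, I would construct $W_i$ by taking $\bar{u}_{n_i}^{T}$ as the first Schur vector and extending by Gram--Schmidt to an orthonormal basis of~$\mathbb{C}^{n_i}$, which gives $W_i=(\bar{u}_{n_i}^{T} \mid \tilde{W}_i)$ with $\tilde{W}_i\in\mathbb{C}^{n_i\times(n_i-1)}$. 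Unitarity of $W_i$ then yields the two identities I will need below, namely $\tilde{W}_i^{*}\tilde{W}_i=I_{n_i-1}$ and $\bar{u}_{n_i}\tilde{W}_i=0$.

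\textbf{Part~(2).} Label the block columns of $S$ by $C_0,C_1,\dots,C_{\nM}$, where $C_0$ carries $\Theta^{i}\otimes\bar{u}_{n_i}^{T}$ in row block $i$ and, for $k\geq 1$, $C_k$ carries $V_k\otimes\tilde{W}_k$ in row block $k$ and zeros elsewhere. Applying the Kronecker compatibility $(A\otimes B)^{*}(C\otimes D)=(A^{*}C)\otimes(B^{*}D)$ block by block, I would compute
\begin{equation*}
C_0^{*}C_0=\sum_{i=1}^{\nM}(\Theta^{i})^{*}\Theta^{i}\otimes(\bar{u}_{n_i}\bar{u}_{n_i}^{T})=\sum_{i=1}^{\nM}(\Theta^{i})^{*}\Theta^{i}=I_{|\boldsymbol{m}|}
\end{equation*}
using $\bar{u}_{n_i}\bar{u}_{n_i}^{T}=1$ together with the block-row decomposition of the unitary $\Theta$, and $C_k^{*}C_k=(V_k^{*}V_k)\otimes(\tilde{W}_k^{*}\tilde{W}_k)=I_{m_k(n_k-1)}$ for $k\geq 1$. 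The cross blocks vanish: $C_k^{*}C_l=0$ for distinct $k,l\geq 1$ because the supports of their nonzero row blocks are disjoint, while $C_0^{*}C_k=\bigl((\Theta^{k})^{*}V_k\bigr)\otimes(\bar{u}_{n_k}\tilde{W}_k)=0$ by the orthogonality $\bar{u}_{n_k}\tilde{W}_k=0$ established in part~(1).

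\textbf{Main obstacle.} Most of the work is bookkeeping---keeping track of which rows and columns live in which partition block and ensuring that the Kronecker product identities are applied with compatible dimensions. The only step with genuine content is part~(1), where the choice of Schur decomposition places $\bar{u}_{n_i}^{T}$ as the first column of $W_i$; once this is in hand, the annihilation identity $\bar{u}_{n_k}\tilde{W}_k=0$ is exactly what forces the cross blocks $C_0^{*}C_k$ in $S^{*}S$ to vanish, so part~(2) reduces to routine computation.
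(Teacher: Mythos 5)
Your proposal is correct and follows essentially the same route as the paper's proof: choose $\bar{u}_{n_i}^T$ as the first Schur vector of $W_i$ (possible because it is a right null vector of $\Qi$), and then verify orthonormality of the columns of $S$, with the cross terms between the first block column and the others killed by $\bar{u}_{n_k}\tilde{W}_k=0$. Your explicit column count showing $S$ is square, and the identity $C_0^{*}C_0=\sum_i(\Theta^i)^{*}\Theta^i=I_{|\boldsymbol{m}|}$, are welcome details that the paper's column-by-column version leaves implicit.
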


\begin{proof} \noindent 
  \begin{enumerate}
  \item Because the row sums of~$\Qi$ vanish, the
    vector~$\bar{u}_{n_i}^T$ is a right eigenvector of~$\Qi$
    associated with the eigenvalue zero. Without loss of generality we
    can choose~$\bar{u}_{n_i}^T$ as the first column of~$W_i$.
  \item By construction, all column vectors of~$S$ are
    normalised. Thus, it remains to show that they are also pairwise
    orthogonal. By definition, any two distinct column vectors
    appearing in the same block of~$S$ are orthogonal. It is trivial
    that column vectors from different blocks are orthogonal unless
    one of the two appears in the first block of~$S$. Thus,
    let~$\theta^T$ be a column vector of~$\Theta$
    and~$v_i^T \otimes \tilde{w}_i^T$ be a column vector of
    any~$V_i \otimes W_i$. With the shorthand for tensor products
    consistent with partitions~\eqref{eq:tensorpartition} introduced in
    Definition~\ref{def:partvector}, the scalar
    product~$\langle \cdot, \cdot \rangle$ of the two columns is

  \[
    \langle \theta^T_{\boldsymbol{m}} \parttens \bar{u}^T_{\boldsymbol{n}}, (0; \dots; v_i \otimes \tilde{w}_i; \dots 0)^T \rangle = \langle \theta^T_{\boldsymbol{m}}(i) \otimes \bar{u}^T_{n_i}, v_i^T \otimes \tilde{w}_i^T \rangle
\]    
and due to the zeroes in all except for the~$i$-th block, all other
summands vanish. Noting that~$\langle u, v\rangle=u (v^*)^T=u^Tv^*$
can be interpreted as a special case of matrix multiplication (where
`$*$' denotes component-wise complex conjugation) we can
use~\eqref{eq:compMatrixTensor}:

\[
\langle \theta^T_{\boldsymbol{m}}(i) \parttens \bar{u}^T_{n_i}, v_i^T \otimes \tilde{w}_i^T \rangle =\langle \theta^T_{\boldsymbol{m}}(i), v^T_i \rangle \langle \bar{u}^T_{n_i}, \tilde{w}_i^T \rangle.
\]
But because~$\bar{u}^T_{n_i}$ appeared as a column in the original
unitary matrix~$W_i$, the~$\tilde{w}_i^T$ are all orthogonal
to~$\bar{u}^T_{n_i}$ so that the above scalar product vanishes. Thus,
the matrix~$S$ is unitary.
  \end{enumerate}
\end{proof}
}

\begin{proposition}[Eigenvalues of the full model~$\fullM$]
  \label{the:full}
  Let~$\modesEv$ be an eigenvalue of the
  model~$\modesM$. Then~$\modesEv$ is also an eigenvalue of the full
  model~$\fullM$
  . 
  Moreover, all~$\nu=\modesEv+\lambda$, where~$\modesEv$ is an
  eigenvalue of~$\modesM^{i,i}$ and~$\lambda \neq 0$ is an eigenvalue
  of~$\Qi$, are eigenvalues of the full model~$\fullM$.
\end{proposition}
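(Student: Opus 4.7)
The plan is to apply the unitary similarity $S^*\fullM S$ with the matrix $S$ constructed in Lemma~\ref{the:unitary} and show that the result is block upper triangular. The eigenvalues of $\fullM$ can then be read off from the diagonal blocks: the top-left block will carry exactly the eigenvalues of $\modesM$, while each of the remaining diagonal blocks will contribute eigenvalues of the form $\modesEv+\lambda$ with $\modesEv$ an eigenvalue of $\modesM^{i,i}$ and $\lambda\neq 0$ an eigenvalue of $\Qi$.

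Let $S_0$ denote the first block column of $S$ (with $i$-th block row $\Theta^i \otimes \bar{u}_{n_i}^T$) and, for $i=1,\dots,\nM$, let $S_i$ denote the block column of $S$ with $V_i\otimes\tilde{W}_i$ in block row $i$ and zeros elsewhere. The first task is to verify $S_i^*\fullM S_0=0$ and $S_i^*\fullM S_j=0$ for all $i\neq j$ with $i\ge 1$. Three properties drive both vanishings: $\Qi\bar{u}_{n_i}^T=0$ from the zero row sums of $\Qi$; $P^{i,k}u_{n_k}^T=u_{n_i}^T$, so that $P^{i,k}\bar{u}_{n_k}^T$ is proportional to $\bar{u}_{n_i}^T$ and $P^{i,j}\tilde{W}_j = u_{n_i}^T(p^j\tilde{W}_j)$ is a matrix whose columns lie in the span of $\bar{u}_{n_i}^T$; and $\tilde{W}_i^*\bar{u}_{n_i}^T=0$, guaranteed by the orthonormality of $W_i$ via Lemma~\ref{the:unitary}. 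Combining these with the compatibility $(A\otimes B)(C\otimes D)=AC\otimes BD$ kills every off-diagonal contribution.

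With block upper triangularity established, the eigenvalues of $\fullM$ are the union of those of the diagonal blocks. For $i\ge 1$, expanding $(V_i^*\otimes\tilde{W}_i^*)(\modesM^{i,i}\oplus \Qi)(V_i\otimes\tilde{W}_i)$ and using $V_i^*\modesM^{i,i}V_i=T_V$ together with $\tilde{W}_i^*\Qi\tilde{W}_i=\hat{T}_W$ (obtained by choosing the Schur form of $\Qi$ with the zero eigenvalue associated to $\bar{u}_{n_i}^T$ listed first and taking its lower-right block) yields $S_i^*\fullM S_i=T_V\oplus\hat{T}_W$, upper triangular with diagonal entries exactly $\modesEv+\lambda$ for $\modesEv$ an eigenvalue of $\modesM^{i,i}$ and $\lambda\neq 0$ an eigenvalue of $\Qi$. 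For the top-left block, a direct calculation shows that the $i$-th block of $\fullM$ applied to the $j$-th column of $S_0$ equals $[(D\modesM D^{-1})\theta_j]^i\otimes\bar{u}_{n_i}^T$, where $\theta_j$ is the $j$-th column of $\Theta$ and $D=\diag(\sqrt{n_1}\,\id_{m_1},\dots,\sqrt{n_{\nM}}\,\id_{m_{\nM}})$ absorbs the normalization factors $\sqrt{n_k/n_l}$ arising from $P^{k,l}\bar{u}_{n_l}^T=\sqrt{n_k/n_l}\,\bar{u}_{n_k}^T$. Combined with the orthonormality of the columns of $S_0$, this gives $S_0^*\fullM S_0=\Theta^*D\modesM D^{-1}\Theta$, which is similar to $\modesM$ and therefore shares its spectrum.

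The main obstacle is the bookkeeping around the normalization factors $\sqrt{n_i}$. Unlike the diagonalizable case handled in the previous proposition, the top-left block $S_0^*\fullM S_0$ does not equal the Schur form $T_{\modesM}$ of $\modesM$ directly; instead it carries the hidden rescaling $\modesM\mapsto D\modesM D^{-1}$ that preserves the eigenvalues but alters the specific triangular form. Once this observation is made, the remainder of the argument reduces to routine Kronecker-product manipulations together with the orthogonality relations between $\bar{u}_{n_i}^T$ and $\tilde{W}_i$ that are built into Lemma~\ref{the:unitary}, and a dimension count ($|\boldsymbol{m}|+\sum_i m_i(n_i-1)=|\boldsymbol{m}\cdot\boldsymbol{n}|$) confirms that all eigenvalues of $\fullM$ are accounted for by this decomposition.
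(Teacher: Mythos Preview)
Your proof is correct and follows essentially the same route as the paper: conjugate $\fullM$ by the unitary $S$ of Lemma~\ref{the:unitary}, show that the sub-diagonal blocks vanish via $\tilde{W}_i^*\bar{u}_{n_i}^T=0$ together with the Kronecker compatibility~\eqref{eq:compMatrixTensor}, and read the eigenvalues off the diagonal blocks. Your handling of the top-left block is in fact more careful than the paper's: you correctly note that $P^{k,l}\bar{u}_{n_l}^T=\sqrt{n_k/n_l}\,\bar{u}_{n_k}^T$ introduces the rescaling $\modesM\mapsto D\modesM D^{-1}$, so that $S_0^*\fullM S_0$ is only \emph{similar} to $\modesM$ rather than equal to $\Theta^*\modesM\Theta$ (the paper asserts the latter and claims full upper triangularity, glossing over these factors), and you accordingly settle for block upper triangularity---which is all the eigenvalue statement needs.
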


{
\begin{proof}
  We demonstrate that with the matrix~$S$ from~\eqref{eq:S} we obtain
  a Schur decomposition of the matrix~$\fullM$. We need to show
  that~$A=S^* \fullM S$ is upper triangular.
  The block structure of~$S$ is rectangular with
  $\nM \times (\nM + 1)$ blocks which means that~$S^*$ has
  an~$(\nM + 1) \times \nM$ block structure. Thus, the resulting
  matrix~$A$ will have $(\nM + 1) \times (\nM + 1)$ blocks and its
  diagonal will consist of the eigenvalues of~$\modesM$ in the upper
  left block followed by the remaining eigenvalues from the
  submatrices~$\modesM^{i,i}$. We show that all blocks~$A^{i,j}$ are
  upper triangular which implies that~$A$ is indeed upper
  triangular. First, a lengthy calculation shows that~$A^{1,1}$ is a
  block-wise expanded form of $\Theta^* \modesM \Theta$ and thus
  upper-triangular. One can see directly that the remaining elements
  on the block diagonal are
  
  \[
  A^{i,i} = (V_i \otimes \tilde{W}_i)^* (\modesM^{i,i} \oplus \Qi) (V_i \otimes \tilde{W}_i)
  \]
  and therefore all upper triangular.  

  It remains to show that the lower diagonal blocks $A^{i,j}$
  with~$i>j$ vanish.  We will demonstrate that the~$A^{i,j}$ vanish
  provided that

\begin{equation}
\label{eq:orthWi}
\tilde{W}_i^* \bar{u}^T_{n_i}=0.
\end{equation}
Equation~\eqref{eq:orthWi} is just another way of saying
that~$\bar{u}_{n_i}^T$ is orthogonal to all columns
of~$\tilde{W}_i$. But this is true because from
Lemma~\ref{the:unitary}(i) we know that~$\bar{u}_{n_i}^T$ is the first
column of~$W_i$, so it must be orthogonal to all column vectors
of~$\tilde{W}_i$.

We now calculate the subdiagonal blocks~$A^{i,j}$, $i>j$. First we
calculate the blocks~$A^{\cdot,1}$ on the first block column. We
observe that

\[
(M \cdot S)^{k,1}=(\modesM^{k,k}\oplus Q^k)(\Theta^k \otimes \bar{u}_{n_k}^T) + \sum_{j\neq k} (\modesM^{k,j} \otimes P^{k,j})(\Theta^j \otimes \bar{u}_{n_j}^T).
\]
Because~$S^*$ is block-diagonal below the first row we can calculate 

\begin{align*}
A^{k+1,1}=(S^*\cdot M \cdot S)^{k+1,1} &= (V_k\otimes \tilde{W}_k)^*(\modesM^{k,k}\oplus Q^k)(\Theta^k \otimes \bar{u}_{n_k}^T)\\
& + \sum_{j\neq k} (V_k\otimes \tilde{W}_k)^*(\modesM^{k,j} \otimes P^{k,j})(\Theta^j \otimes \bar{u}_{n_j}^T)
\end{align*}
because in the row $(k+1)$-th row of~$S^*$ for~$k=1, \dots, \nM$ only the~$k$-th block is non-zero. By taking advantage of~\eqref{eq:compMatrixTensor} we obtain

\begin{align*}
  A^{k+1,1}
  &= (V_k\otimes \tilde{W}_k)^*(\modesM^{k,k}\Theta^k \otimes \bar{u}_{n_k}^T + \Theta^k \otimes Q^k\bar{u}_{n_k}^T)\\
& + \sum_{j\neq k} (V_k\otimes \tilde{W}_k)^*(\modesM^{k,j} \Theta^j \otimes P^{k,j} \bar{u}_{n_j}^T)\\
&= (V_k\otimes \tilde{W}_k)^*(\modesM^{k,k}\Theta^k \otimes \bar{u}_{n_k}^T)\\
& + \sum_{j\neq k} (V_k\otimes \tilde{W}_k)^*(\modesM^{k,j} \Theta^j \otimes \bar{u}_{n_k}^T)
\end{align*}
where we have used~$Q^k\bar{u}_{n_k}^T=0$ and~$P^{k,j} \bar{u}_{n_j}^T=\bar{u}_{n_k}^T$. Again using~\eqref{eq:compMatrixTensor} we calculate

\begin{align*}
  A^{k+1,1}
  &= V_k^*\modesM^{k,k}\Theta^k \otimes \tilde{W}_k^* \bar{u}_{n_k}^T)\\
& + \sum_{j\neq k} V_k^*\modesM^{k,j} \Theta^j \otimes \tilde{W}_k^* \bar{u}_{n_k}^T)
\end{align*}
This vanishes due to~\eqref{eq:orthWi} as explained above.

For the remaining blocks~$A^{k+1,l+1}$, $k>l=1, \dots, \nM-1$ we
simply calculate

\begin{align*}
A^{k+1,l+1} &= (V_k\otimes \tilde{W}_k)^*(\modesM^{k,l} \otimes P^{k,l})(V_l \otimes \tilde{W}_l)\\
&= (V_k^*\modesM^{k,l} \otimes \tilde{W}_k^* P^{k,l})(V_l \otimes \tilde{W}_l)\\
&= (V_k^*\modesM^{k,l} V_l)  \otimes (\tilde{W}_k^* P^{k,l}\tilde{W}_l).
\end{align*}
Replacing~$P^{k,l}$ by~$\bar{u}_{n_k}^T \otimes p^l$~\eqref{eq:Pij} we
get

\[
A^{k+1,l+1}= (V_k^*\modesM^{k,l} V_l)  \otimes (\tilde{W}_k^* \bar{u}_{n_k}^T) \otimes (p^l\tilde{W}_l)
\]
where---due to the term~$\tilde{W}_k^* \bar{u}_{n_k}^T$---we again
conclude with~\eqref{eq:orthWi} that~$A^{k+1,l+1}$ vanishes.

\end{proof}
}

\subsection{Sojourn times in modes}
\label{sec:modalsojourns}

{We will now investigate the sojourn times within the states that
  represent the modes~$\Mi$. The switching between modes is
  represented by a model with infinitesimal generator~$\modesM$ and
  one can ask if the dynamics is preserved after~$\modesM$ is combined
  with the other components~$\hmm$ to the generator~$\fullM$ of the
  full model . We denote by $f_{\modesM^i}(t)$ the density function of
  the sojourn time in mode~$\Mi$ represented by~$\modesM$ and
  by~$f_{\Mi}(t)
  $ the sojourn time densities of~$\Mi$
  in the augmented state space of the generator~$\fullM$ of the full
  model. If the mode switching dynamics is preserved, the sojourn time
  densities should be equal and we will show that
  indeed~$f_{\Mi}(t)=f_{\modesM^i}(t)$.}

\begin{proposition}[Modal sojourn times]
  For~$f_{\Mi}(t)$, sojourn time densities within mode~$\Mi$ with an initial distribution~$p^0$ as in Definition~\ref{def:initialdistribution}, we
  have~$f_{\Mi}(t)=f_{\modesM^i}(t)$.
\end{proposition}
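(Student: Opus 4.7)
I will express both densities in standard phase-type form and show that the factor involving the intra-modal generator $\Qi$ collapses to the scalar one. By Definition~\ref{def:initialdistribution}, the restriction of the initial distribution $p^0$ to the states representing mode~$\Mi$ is a pure tensor $v^i\otimes w^i$, with $v^i$ a stochastic row vector of length $m_i$ and $w^i$ a stochastic row vector of length $n_i$. The sojourn time in~$\Mi$ for the model~$\modesM$ alone is phase-type with sub-generator $\modesM^{i,i}$ (whose negative row sums encode the exits to other modes), giving
\begin{equation*}
f_{\modesM^i}(t)=v^i\exp(\modesM^{i,i}t)\bigl(-\modesM^{i,i}u_{m_i}^T\bigr).
\end{equation*}
For the full model the sub-generator on the states of mode~$\Mi$ is the diagonal block $\fullM^{i,i}=\modesM^{i,i}\oplus\Qi$ from Definition~\ref{def:components}, so
\begin{equation*}
f_{\Mi}(t)=(v^i\otimes w^i)\exp\bigl((\modesM^{i,i}\oplus\Qi)t\bigr)\bigl(-\fullM^{i,i}(u_{m_i}^T\otimes u_{n_i}^T)\bigr).
\end{equation*}

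\textbf{Main calculation.} First, I would verify that
\begin{equation*}
-\fullM^{i,i}(u_{m_i}^T\otimes u_{n_i}^T)=\bigl(-\modesM^{i,i}u_{m_i}^T\bigr)\otimes u_{n_i}^T,
\end{equation*}
which follows by expanding the Kronecker sum and using $\Qi u_{n_i}^T=0$. Then I apply the Kronecker-sum exponential identity $\exp(A\oplus B)=\exp(A)\otimes\exp(B)$ together with the mixed-product rule $(A\otimes B)(C\otimes D)=(AC)\otimes(BD)$; both are standard and will be available from Appendix~\ref{sec:mathsbackground}. This yields
\begin{equation*}
f_{\Mi}(t)=\Bigl(v^i\exp(\modesM^{i,i}t)(-\modesM^{i,i}u_{m_i}^T)\Bigr)\otimes\Bigl(w^i\exp(\Qi t)u_{n_i}^T\Bigr).
\end{equation*}
Since the row sums of $\Qi$ vanish, $\exp(\Qi t)u_{n_i}^T=u_{n_i}^T$, and $w^iu_{n_i}^T=1$ because $w^i$ is a probability vector. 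The second Kronecker factor is therefore the scalar~$1$, and the right-hand side collapses to $f_{\modesM^i}(t)$, as claimed.

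\textbf{Main obstacle.} The only delicate point is justifying that $\fullM^{i,i}$ is indeed the correct phase-type sub-generator for the full-model sojourn in~$\Mi$, i.e.\ that the total rate at which a state $(j_0,k)$ leaves mode~$\Mi$ is independent of the intra-modal coordinate $k$ and equals $-(\modesM^{i,i}u_{m_i}^T)_{j_0}$. This reduces to the identity $P^{i,j}u_{n_j}^T=u_{n_i}^T$, which follows because $p^j$ is a stochastic row vector, combined with row-sum conservation of the full~$\modesM$. Once this structural observation is in place, everything else is routine Kronecker algebra driven by the zero row sums of~$\Qi$ and the normalisation of~$w^i$.
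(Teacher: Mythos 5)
Your proof is correct, and it rests on exactly the same mechanism as the paper's: write $\fullM^{i,i}=\modesM^{i,i}\oplus\Qi$, use $\exp(A\oplus B)=\exp(A)\otimes\exp(B)$ and the mixed-product rule, and let the zero row sums of $\Qi$ (so $\exp(\Qi t)u_{n_i}^T=u_{n_i}^T$) together with the normalisation $w^iu_{n_i}^T=1$ collapse the intra-modal tensor factor to the scalar $1$. The only real difference is presentational: you use the canonical phase-type form $\alpha\exp(Tt)(-Tu^T)$ built from the diagonal block and its exit-rate vector, whereas the paper writes the entrance-weighted density $p^0\fullM^{2,1}\exp(\fullM^{1,1}t)\fullM^{1,2}u^T$ and therefore also has to push the tensor algebra through the off-diagonal blocks (checking $\exp(Q^1t)P^{1,2}=P^{1,2}$ and $P^{2,1}u_{n_1}^T=u_{n_2}^T$); your formulation avoids this and handles an arbitrary number of modes and an arbitrary mode index $i$ directly, where the paper restricts to two aggregates "for simplicity". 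One small point of care: $v^i$ and $v^i\otimes w^i$ are subvectors of stochastic vectors rather than stochastic vectors themselves, so both densities need the same normalising constant $v^iu_{m_i}^T$; since it is common to the two sides, the identity $f_{\Mi}(t)=f_{\modesM^i}(t)$ is unaffected.
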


\begin{proof}
  For simplicity we only treat the case of two aggregates of states,
  $\fullM^1$ and~$\fullM^2$. For the sojourn time within~$\fullM^1$ we
  have

\[
f_{\fullM^1}(t) = p^0 \fullM^{2,1} \exp \left( \fullM^{1,1} t \right) \fullM^{1,2} u_{m_2 n_2}^T
\]
where~$p^0=p_{\modesM^2}^0 \otimes p_{Q^2}^0$ is a suitably normalised
initial state probability distribution. Substituting
from~\eqref{eq:fullM} we obtain for

\begin{align*}
  \exp \left( \fullM^{1,1} t \right) \fullM^{1,2}  &= \exp \left(
                                                     \left[
                                                     \modesM^{1,1}
                                                     \oplus Q^1
                                                     \right] t
                                                     \right)
                                                     \fullM^{1,2} \\
                                                   &=  \left[ \exp
                                                     \left(
                                                     \modesM^{1,1} t
                                                     \right) \otimes
                                                     \exp \left(  Q^1
                                                     t \right) \right]
                                                     \left(
                                                     \modesM^{1,2}
                                                     \otimes P^{1,2}
                                                     \right)
\end{align*}
{where we have used~\eqref{eq:expkronecker} for calculating
the matrix exponential.} Now,

\begin{align*}
                                                   \left[ \exp
                                                     \left(
                                                     \modesM^{1,1} t
                                                     \right) \otimes
                                                     \exp \left(  Q^1
                                                     t \right) \right]
                                                     \left(
                                                     \modesM^{1,2}
                                                     \otimes P^{1,2}
                                                     \right) &=   \left[ \exp \left( \modesM^{1,1} t \right) \modesM^{1,2} \right]  \otimes P^{1,2}
\end{align*}
{according to the compatibility of tensor and matrix
  product~\eqref{eq:compMatrixTensor} which will be used repeatedly
  below. Also note that
  $\exp \left( Q^1 t \right) \cdot P^{1,2} = P^{1,2}$.} Multiplying
this on the right by~$u_{m_2 n_2}^T=u_{m_2}^T \otimes u_{n_2}^T$ leads
to

\[
\left\{ \left[ \exp \left( \modesM^{1,1} t \right) \modesM^{1,2}
  \right] \otimes P^{1,2} \right\} \left( u_{m_2}^T \otimes u_{n_2}^T
\right) = \left[ \exp \left( \modesM^{1,1} t \right) \modesM^{1,2}
  u_{m_2}^T \right] \otimes u_{n_1}^T
\]
{where we have evaluated~$P^{1,2}u^T_{n_2} = u^T_{n_1}$ in the
  right-most term. Analogous calculations will be carried out
  automatically below.}  The above result is now multiplied on the
left by~$\fullM^{2,1} =\modesM^{2,1} \otimes P^{2,1}$:

\begin{align*}
& \left( \modesM^{2,1} \otimes P^{2,1}\right) \left[ \exp \left( \modesM^{1,1} t \right) \modesM^{1,2} u_{m_2}^T \right] \otimes u_{n_1}^T 
= \left[ \modesM^{2,1} \exp \left( \modesM^{1,1} t \right) \modesM^{1,2} u_{m_2} \right] \otimes  u_{n_2}^T.
\end{align*}
Finally we multiply the preceding result on the left
by~$p^0 = p_{\modesM^2}^0 \otimes p_{Q^2}^0$ and compute

\begin{align*}
f_{\fullM^1}(t) 
&=\left( p_{\modesM^2}^0 \otimes p_{Q^2}^0 \right) \left[ \modesM^{2,1} \exp \left( \modesM^{1,1} t \right) \modesM^{1,2} u_{m_2} \right] \otimes  u_{n_2}^T\\
&= \left[ p_{\modesM^2}^0\modesM^{2,1} \exp \left( \modesM^{1,1} t
  \right) \modesM^{1,2} u_{m_2}^T \right] \otimes \left( p_{Q^2}^0
  u_{n_2}^T \right).
\end{align*}
Now, because~$\left( p_{Q^2}^0 u_{n_2}^T \right)=1$ we obtain the
desired result:

\begin{align*}
f_{\fullM^1}(t) &= p_{\modesM^2}^0\modesM^{2,1} \exp \left( \modesM^{1,1} t \right) \modesM^{1,2} u_{m_2} ^T = f_{\modesM^1}(t).
\end{align*}

\end{proof}


\subsection{Full solution for $p^i=\statQ^i$}
\label{sec:full}

If we choose initial conditions~$p^i=\statQ^i$, where the~$\statQi$
are stationary distributions of the models~$\Qi$, the solution of the
full model has a particularly simple form.

\begin{proposition}[Full solution for~$p^i=\statQ^i$]
\label{the:fullsolution}
Let $\modessol$ be the time-dependent solution for the initial
condition~$w_{\boldsymbol{n}}^0$ and~$\statM_{\boldsymbol{n}}$ be the
stationary solution of the infinitesimal generator~$\modesM$ with
their partition ${\boldsymbol{m}}$. 
Let~$\statQi$, $i=1, \dots, \nM$ be the stationary distributions
of~$\Qi$ or written as a partitioned vector, $\statQsol$ with its
partition~${\boldsymbol{n}}$. If for each generator~$\Qi$ we
set~$p^i=\statQi$ and we choose an initial
distribution~$\pnull=v_{\boldsymbol{m}}^0 \parttens \statQsol$
consistent with Definition~\ref{def:initialdistribution}, the
solution~$p_{{\boldsymbol{m}}\cdot{\boldsymbol{n}}}(t)$ of the full
model is

\begin{equation}
  \label{eq:fulltime}
  p_{{\boldsymbol{m}}\cdot{\boldsymbol{n}}}(t)=\modessol \parttens \statQsol = (v^1(t)\otimes \statQ^1; \dots; v^i(t) \otimes \statQi; \dots; v^{\nM} \otimes \statQ^{\nM} ).
\end{equation}
By taking the limit~$t\to\infty$ we obtain the stationary distribution
  \begin{equation}
    \label{eq:stationary}
    \statFullM_{{\boldsymbol{m}}\cdot{\boldsymbol{n}}}=\statM_{\boldsymbol{m}} \parttens \statQsol=(\statM^1 \otimes \statQ^1;  \dots; \statMi \otimes \statQi; \dots;
    \statM^{\nM} \otimes \statQ^{\nM}).
  \end{equation}
\end{proposition}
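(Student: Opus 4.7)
The plan is to verify directly that the proposed expression $p_{\boldsymbol{m}\cdot\boldsymbol{n}}(t) = v_{\boldsymbol{m}}(t) \parttens \statQsol$ satisfies the forward Kolmogorov equation $\dot{p}(t) = p(t) \fullM$ together with the specified initial condition, and then conclude uniqueness. Since $\modessol$ solves $\dot{v}(t) = v(t) \modesM$ with $v(0) = v_{\boldsymbol{m}}^0$, and since the initial condition of the full model was chosen as $v_{\boldsymbol{m}}^0 \parttens \statQsol$, the candidate clearly matches at $t = 0$; the work is to confirm the ODE.

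The natural approach is a block-by-block calculation. Using the block structure of $\fullM$ from~\eqref{eq:fullM}, the $i$-th partition of $p(t) \fullM$ splits into a diagonal contribution and off-diagonal contributions. For the diagonal block I would expand $\modesM^{i,i} \oplus \Qi$ using the definition of the Kronecker sum and exploit $\statQ^i \Qi = 0$ (the defining property of the stationary distribution of $\Qi$) together with the mixed-product rule $(A \otimes B)(C \otimes D) = AC \otimes BD$ from Appendix~\ref{sec:mathsbackground}. This should collapse $(v^i(t) \otimes \statQ^i)(\modesM^{i,i} \oplus \Qi)$ to $v^i(t) \modesM^{i,i} \otimes \statQ^i$. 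For the off-diagonal blocks, I would substitute $P^{j,i} = u_{n_j}^T \otimes p^i$ from~\eqref{eq:Pij}, apply the same compatibility rule, and use $\statQ^j u_{n_j}^T = 1$ (because $\statQ^j$ is a probability vector) to simplify $(v^j(t) \otimes \statQ^j)(\modesM^{j,i} \otimes P^{j,i}) = v^j(t) \modesM^{j,i} \otimes p^i$. This is precisely the step where the hypothesis $p^i = \statQ^i$ becomes essential: it is what lets us replace $p^i$ by $\statQ^i$ so that every summand carries the common right factor $\statQ^i$, and hence the $i$-th partition collapses to $\bigl(\sum_j v^j(t) \modesM^{j,i}\bigr) \otimes \statQ^i = \dot{v}^i(t) \otimes \statQ^i$, which matches the time derivative of the proposed solution block.

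The main obstacle, and what the proof really turns on, is making sure that the off-diagonal contributions all produce the factor $\statQ^i$ rather than some different $p^i$; without the assumption $p^i = \statQ^i$ the tensor structure is destroyed and the ansatz $v_{\boldsymbol{m}}(t) \parttens \statQsol$ no longer closes under the dynamics (which is exactly the content of the cautionary remark preceding the proposition). Care is also required with the conventions: the $v^i(t)$ are row vectors, $\statQ^i$ is a row vector, $u_{n_j}^T$ is a column vector, and $p^i$ is a row vector, so the factor $\statQ^j u_{n_j}^T = 1$ really is a scalar and the identification $\statQ^j P^{j,i} = p^i$ goes through cleanly.

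For the stationary distribution, I would simply take the limit $t \to \infty$ in~\eqref{eq:fulltime}; since $\modessol \to \statM_{\boldsymbol{m}}$ by ergodicity of $\modesM$ and since $\parttens$ acts continuously on each block, the limit gives~\eqref{eq:stationary}. Alternatively, one can verify directly that $\statM_{\boldsymbol{m}} \parttens \statQsol$ is a left null vector of $\fullM$ by repeating the calculation above with $\dot{v}^i(t)$ replaced by $(\statM_{\boldsymbol{m}} \modesM)_i = 0$, which also confirms that the result is a bona fide stochastic vector (cf.\ the first bullet of Remark~\ref{rem:defstate}).
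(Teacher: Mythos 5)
Your proposal is correct and follows essentially the same route as the paper's proof: a block-wise verification that the ansatz $\modessol \parttens \statQsol$ satisfies the forward equation, collapsing the diagonal block via $\statQ^i \Qi = 0$ and the off-diagonal blocks via $\statQ^j P^{j,i} = \statQ^i$ (which is exactly where $p^i = \statQ^i$ enters). The only additions beyond the paper's argument are the explicit appeal to uniqueness of the ODE solution and the alternative null-vector check for the stationary distribution, both of which are harmless elaborations rather than a different method.
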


\begin{remark}
  The stationary distribution~\eqref{eq:stationary} is independent of
  the initial distribution~$\pnull$, so, for~$p^i=\statQi$, we
  converge to the stationary distribution~\eqref{eq:stationary} also
  for~$\pnull=(v_{\boldsymbol{m}}^0 \parttens w_{\boldsymbol{n}}^0)$
  with $w_{\boldsymbol{n}}^0\neq \statQsol$ and even for arbitrary
  initial conditions~$\pnull$ that are inconsistent with
  Definition~\ref{def:initialdistribution}.
\end{remark}

\begin{proof}
  That~\eqref{eq:fulltime} is a solution can be shown by substituting~$\fullsol=\modessol \parttens \statQsol$ into 

\begin{equation}
\label{eq:infgen}
\frac{d p(t)}{d t} = p(t) M,
\end{equation}
where~$M$ is the generator of the full model~\eqref{eq:fullM}. First we calculate the left-hand side:

\begin{align}
  \frac{d \fullsol}{d t} &= \frac{d \left( \modessol \parttens \statQsol \right)}{dt} \nonumber\\
&=\left(\frac{d \modessol}{dt}\right) \parttens \statQsol \nonumber\\
&= \left(\modessol \modesM \right) \parttens \statQsol \label{eq:laststepfull}
\end{align}
where the last equality~\eqref{eq:laststepfull} follows
because~$\modessol$ is a solution of the model generated
by~$\modesM$.

We now show that we also obtain~\eqref{eq:laststepfull} from the right-hand side of~\eqref{eq:infgen}. For the~$i$-th component $[\fullsol \cdot M]^i$ we calculate

\begin{align*}
   [\fullsol \cdot M]^i & =\left(v^i(t) \otimes \statQ^i \right)\left(\modesM^{i,i}\oplus \Qi \right) + \sum_{j\neq i} \left( v^j(t) \otimes \statQ^j \right)\left(\modesM^{j,i} \otimes P^{j,i} \right)
\end{align*}
For the first summand the contribution of~$\Qi$ vanishes because
of~$\statQi \Qi=0$
\begin{equation}
\label{eq:solfirst}
\left(v^i(t) \otimes \statQi \right)\left(\modesM^{i,i}\oplus \Qi \right) = \left(v^i(t) \modesM^{i,i} \right) \otimes \statQi + v^i(t) \otimes \statQi \Qi 
=\left(v^i(t) \modesM^{i,i} \right) \otimes \statQi.
\end{equation}
Because of~$\statQ^j P^{j,i}=\statQi$ the second summand simplifies to
\begin{equation}
\label{eq:solsecond}
\sum_{j\neq i} \left( v^j(t) \otimes \statQ^j \right)\left(\modesM^{j,i} \otimes P^{j,i} \right) = \sum_{j\neq i} \left( v^j(t)\modesM^{j,i} \right) \otimes \statQi.
\end{equation}
With~\eqref{eq:solfirst} and~\eqref{eq:solsecond} we derive for each component:

\[
 [\fullsol \cdot M]^i=\sum_{i=1}^{\nM} \left( v^j(t)\modesM^{j,i} \right) \otimes \statQi.
\]
This means that the right-hand side of~\eqref{eq:infgen} is indeed of
the form~\eqref{eq:laststepfull} which confirms
that~\eqref{eq:fulltime} is a solution.

\end{proof}

\newpage

\section{Conclusion}
\label{sec:conclusions}


We have proposed a new model for representing modal gating, the
spontaneous switching of ion channels between different levels of
activity. The model is suitable for modelling channels with an
arbitrary number of modes and is capable of representing both the
probabilistic opening and closing within modes as well as the
stochastic switching between modes that regulates these dynamics.

\subsection{Modular representation of modal gating}
\label{sec:betterstats}

In comparison with previous studies, the model presented here
incorporates modal gating in a much more transparent
way. \citet{Ull:12a} developed their model of the \ipr\ from a binding
scheme. First, the authors determined the set of open and closed model
states from a statistical model selection criterion. Second, they
determined which of these states should account for which of the three
modes observed by \citet{Ion:07a}. The decision that a particular open
or closed state should account for the mode showing a low,
intermediate or high level of activity was based on heuristic
inspection of the ligand-dependency of modal gating. The model was
parameterised by optimising a likelihood that accounted for various
sources of single channel data including statistics of modal
gating. This treats the parameter space of their model as a black box
from which a suitable set of parameters capable of accounting for all
data sets is selected by optimisation. We expect such an approach to
be statistically less efficient than a model whose structure
incorporates modal gating more explicitly.

\citet{Sie:12a} used modal gating as the underlying construction
principle of their model by separating the inference of parameters
related to dynamics within modes from estimation of parameters related
to switching between modes. First, models for the inactive mode~\Mone\
and the active mode~\Mtwo\ were inferred by fitting segments of data
representative of each of the two modes---in fact, the same models
were re-used in the present study. However, because at that time
rigorous statistical techniques for segmenting ion channel data by
modes were not available, the time scales of the switching between
both modes was inferred by connecting the submodels for~\Mone\
and~\Mtwo\ with a pair of transition rates whose values were then
determined from a fit to complete traces of single channel
data. Similar to \citet{Ull:12a} modal gating was thus incorporated
into the model without explicitly considering its stochastic dynamics
apparent in the data.

The model presented here improves the model from \citet{Sie:12a} by
explicitly modelling modal gating. After the stochastic process of
switching between modes has been extracted from the data using a
statistical method such as \citet{Sie:14a} instead of arbitrarily
introducing transition rates between modes as in our previous study,
we 
instead fit a model~$\modesM$ directly to the stochastic process of
mode switching. This enables us to accurately represent mode
switching, only adding exactly as many parameters as required. In
comparison to our previous model, the new model described here
requires only two additional parameters. Inspection of the sojourn
time histograms show that these two parameters are essential in order
to account for the fact that sojourns in the nearly inactive
mode~\Mone\ exhibit two different time scales which cannot be
represented by a model with less parameters.

It is important to note that none of the components $\hmm$ of our
model are determined by fitting to the sequence of open and closed
events observed in experiments---the models~$\Qi$ are inferred from
segments of the data and the model~$\modesM$ is parametrised from
transitions between the modes~$\Mi$. Thus, the open and closed time
distributions~$f_O(t)$ and~$f_C(t)$, respectively, can be considered a
prediction of our hierarchical model~$\fullM$. That the hierarchical
model~$\fullM$ outperforms our previous model whose transition rates
were inferred from a direct fit to complete traces of open and closed
events indicates that the new approach is a superior representation of
the data.


The modular structure of our hierarchical model which separates the
representation of transitions between modes (inter-modal kinetics)
from the dynamics within modes (intra-modal kinetics) not only
provides a more parsimonious representation than previous models but,
most notably, evidence is accumulating that mode switching is more
important for ion channel function than intra-modal kinetics. This was
recently shown in two studies of the role of \ipr\ in intracellular
calcium dynamics. \citet{Cao:14a} showed that the essential features
of calcium oscillations in airway smooth muscle could be preserved
after iteratively simplifying the model from \citet{Sie:12a} to a
two-state model that only accounted for switching between the two
modes neglecting the kinetics of transitions between multiple open and
closed states within the modes. \citet{Sie:15b} applied similar
reduction techniques to demonstrate that also the stochastic dynamics
of small clusters of \ipr s can be captured by a two-state model
reduced to the dynamics of mode switching. In our new hierarchical
model, inter-modal and intra-modal kinetics are represented separately
so that the model representation with the right level of detail can be
chosen based on the requirements of a specific application.


\subsection{Biophysical implications of modal gating}
\label{sec:biophysics}

Although modal gating has been observed for a long time it has rarely
been accounted for in ion channel models
. The crucial importance of modal gating has only recently been
appreciated among investigators of the \ipr\ channel and it is now
widely recognised in the community \citep{Mak:15a}. Various
independent sources of evidence indicate that modal gating must be
accounted for, both for understanding~\ipr\ function as well as for
gaining insight into biophysical properties of the channel
molecule. As mentioned in the previous section, the role of \ipr\ in
intracellular calcium dynamics is defined by its behaviour on the slow
time scale of transitions between different modes rather than the fast
time scale of opening and closing \citep{Cao:14a,
  Sie:15b}. Previously, \citet{Ion:07a} discovered that the~\ipr\
adjusts its level of activity depending on ligands such as calcium by
regulating the proportion of time that the channel spends in different
modes. This was subsequently confirmed by the statistical analysis by
\citet{Sie:14a}. Whereas these results reveal the major functional
implications of modal gating, a detailed analysis of the potassium
channel KscA, discussed in more detail below, gives insight into how
different modes arise from biophysical constraints of the channel
protein \citep{Cha:07a,Cha:07b,Cha:11a}. More recently,
\citet{Vij:15a} published a similar study in acetylcholine
receptors. Also see the commentary by
\citet{Gen:15a}. 
This suggests that modes form a fixed repertoire of possible
behaviours defined by the molecular properties of the channel. Being
constrained to a few different modes, ion channels overcome these
limitations by switching between modes. 


This interpretation implies that appropriate analysis of modal gating
may enable us to extract information on the transitions between
different biophysical states from single channel data which---apart
from giving an accurate representation of its dynamics---has always
been a strong motivation for modelling ion channels. 
The simplest possible representation of an ion channel is a two-state
Markov model with only one open and one closed state. Because opening
of the channel involves a rearrangement of the three-dimensional
structure of the channel protein, known as a conformational change, it
is clear that these two different model states at the same time
correspond to different biophysical states of the channel
protein. Thus, the transition rates between the open and the closed
state provide not just a descriptive representation of the time scale
of opening and closing but, in fact, may stand for the dynamics of a
biophysical process, the conformational change involved with the
opening of the channel. This ``mechanistic'' interpretation explains
the popularity of this type of
model. 
On the one hand the Markov assumption implies that open and closed
times are exponentially distributed which means that durations of
channel openings and closings both have characteristic time
scales~$\tau_O$ and~$\tau_C$ given by the
parameters 
of the exponential sojourn time distributions~$f_O(t)$
and~$f_C(t)$. However, many ion channels exhibit multiple
characteristic open and closed times that cannot be represented by
exponential distributions. On the other hand whereas an open ion
channel must be in a different conformation than a closed ion channel
distinguishing only two conformational states is a very coarse
description of the complicated deformations of channel proteins that
can be identified by molecular dynamics models. Nevertheless, if our
goal is to base our models on rigorous statistical analysis, for some
data we may not be able to identify more than two states.

Non-exponential open and closed times can often be represented
satisfactorily by 
%
aggregated continuous-time Markov models where more than one state is
used for representing the channel being open or closed. These models
provide a simple generalisation of the two-state Markov model and
account for more than just one characteristic open or closed time
scale $\tau_O$ and $\tau_C$. By definition, the sojourn times in the
open or closed class of an aggregated Markov model are distributed
according to a phase-type distribution, a class of distributions
representing the time a Markov chain spends in a set of transient
states until exiting to an absorbing state \citep{Neu:75a, 
  Neu:81a}. 
As with the two-state model it is tempting to also associate the
individual states of an aggregated Markov model with different
biophysical states of the channel
protein. 
The multiple open and closed states of an aggregated Markov model
could be interpreted to resolve in more detail the series of
conformational changes that the channel goes through while it
opens. If this interpretation was valid one could hope to discover
details of the molecular structure of ion channels beyond the trivial
distinction between an open and a closed state once the ``best''
aggregated Markov model for a given data set has been found.

Unfortunately, this ``mechanistic'' interpretation of aggregated
Markov models has several flaws. First, the only reason that a
particular model consists of multiple open and closed states is that
multiple characteristic open and closed times were
observed. Identifying each of these states with a distinct
conformational state relies mostly on the analogy with the two-state
model with at best little and usually no empirical evidence. Neither
experimental techniques nor biophysical modelling approaches currently
available enable us to identify a three-dimensional configuration of
the channel protein that corresponds to a model state with a short
open time and distinguish it from another conformational state that is
characterised by a long open time. If we allow the time scale of
conformational changes to be non-exponentially distributed in general,
multiple open or closed states may actually be associated with the
same conformation. In contrast, it is likely that some conformational
states may not have a strong enough influence on the dynamics that
they are represented by a state in a model inferred from the data. 
Second, and more importantly, aggregated Markov models are
only defined up to equivalence \citep{Fre:85a,Fre:86a,Kie:89a,
  Bru:05a, Sie:12b} with other models having the same number of open
and closed states. In particular, it can be shown that models with
completely different adjacency matrices can describe the same process
\citep{Kie:89a} although there is a canonical phase-type description,
given, for example, by its Laplace-Stieltjes transform. Thus,
interpreting the graphical structure of an aggregated Markov model as
a description of possible transitions between different conformational
states is not necessarily meaningful without further data. A related
problem is the fact that some adjacency matrices lead to
non-identifiable models, in particular, certain types of cyclic models
are non-identifiable. Whereas it is unlikely that transitions between
conformational states underlie any fundamental restrictions of this
kind, only some of these transitions would be identifiable from
experimental data. 
It is important to note that the described challenge of relating
aggregated Markov models with biophysical processes does not restrict
in any way their capability of statistically capturing the stochastic
dynamics of ion channels. This only demonstrates that aggregated
Markov models are a more abstract representation than they may appear
to be at first glance.

In contrast, interpreting mode switching as transitions between
distinct biophysical states does not suffer from these
difficulties. \citet{Cha:07a,Cha:07b,Cha:11a} were able to restrict
the KscA channel to one of its normally four modes by mutating a
particular site of the amino acid sequence of the channel
protein. Combining crystallography imaging and molecular dynamics
modelling they could further demonstrate that the four modes were
related to different conformational states of the channel. It is
therefore likely that switching between distinct characteristic
dynamical patterns in single channel data can be directly associated
with the transition from one to another conformation of the channel
protein. This implies that models which accurately represent mode
switching can also be used to infer the time scales of transitions
between biophysical states associated with these modes. This opens up
the exciting possibility that we can gain insight into biophysical
processes involved in ion channel gating by statistical analysis and
modelling of single channel data rather than having to rely on more
time-consuming experimental techniques such as crystallography or more
laborious modelling techniques such as molecular dynamics.

\section*{Funding}

{This research was in part conducted and funded by the Australian Research Council Centre of Excellence in Convergent Bio-Nano Science and Technology (project number CE140100036). P. Taylor is supported by the Australian Research Council (ARC) Laureate Fellowship FL130100039 and the ARC Centre of Excellence for Mathematical and Statistical Frontiers (ACEMS).}

\newpage
\appendix
{
\section{Mathematical background}
\label{sec:mathsbackground}

The results presented in the main text are derived from the following
properties of the Kronecker product and sum and some well-known
results from linear algebra.

\begin{proposition}[Properties of Kronecker product $\otimes$ and
  Kronecker sum $\oplus$]
  The following properties of the Kronecker product and sums can all
  be found in \citet{Hor:94a}.
  \begin{enumerate}
  \item Transposition and conjugate transpose (Properties 4.2.4 and
    4.2.5
    ):
    \begin{equation}
      \label{eq:transposition}
      (A \otimes B)^T=A^T \otimes B^T, \quad       (A \otimes B)^*=A^* \otimes B^*.
    \end{equation}
  \item Compatibility of tensor product and matrix multiplication
    (Lemma 4.2.10
    ): Let~$A \in \mathbb{R}^{k_1\times m_1}$,
    $C \in \mathbb{R}^{m_1\times n_1}$,
    $B \in \mathbb{R}^{k_2\times m_2}$,
    $D \in \mathbb{R}^{m_2\times n_2}$.
  \begin{equation}
    \label{eq:compMatrixTensor}
    (A\otimes B)(C \otimes D) = (A C) \otimes (B D) \in \mathbb{R}^{k_1 k_2 \times n_1 n_2}.
  \end{equation}
\item Eigenvalues of Kronecker sums~$A \oplus B$ (Theorem
  4.4.5
  ): Let~$\alpha$, $\beta$ denote eigenvalues of the square
  matrices~$A$ and~$B$. Then the eigenvalues of~$\fullM=A \oplus B$
  are
  \begin{equation}
    \label{eq:ev}
    \gamma=\alpha + \beta.
  \end{equation}

\item Matrix exponentials of Kronecker sums (Chapter 6, Problem
  14
  ): For square matrices~$A\in\mathbb{R}^{m\times m}$
  and~$B\in\mathbb{R}^{n\times n}$:
  \begin{equation}
    \label{eq:expkronecker}
    \exp \left(A \oplus B \right) = \exp (A) \otimes \exp (B) \in \mathbb{R}^{mn
      \times mn}.
  \end{equation}
  \end{enumerate}
\end{proposition}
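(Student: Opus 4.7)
The plan is to verify the four Kronecker identities in an order that allows later items to build on earlier ones, since properties (3) and (4) depend in essential ways on (2). All four are stated as routine facts from \citet{Hor:94a}, so the task is less about deep insight and more about organising a clean direct argument.

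First, for property (1), I would read the identity off the block representation~\eqref{eq:kroneckerprod}. The $(i,j)$ block of $A \otimes B$ is $a_{ij} B$, so transposing $A \otimes B$ simultaneously swaps the outer block indices and transposes each block, yielding a matrix whose $(j,i)$ block is $a_{ij} B^T = (A^T)_{ji} B^T$; this is exactly $A^T \otimes B^T$. The conjugate-transpose version is identical once complex conjugation is applied to the scalars $a_{ij}$.

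Next, for property (2), I would compute the $(i,j)$ block of $(A \otimes B)(C \otimes D)$ using block-wise multiplication: it equals $\sum_k (a_{ik} B)(c_{kj} D) = \bigl(\sum_k a_{ik} c_{kj}\bigr) BD = (AC)_{ij}\,(BD)$, which is the $(i,j)$ block of $(AC) \otimes (BD)$. This is the workhorse identity for everything that follows.

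For property (3), I would exploit the decomposition $A \oplus B = A \otimes \id_n + \id_m \otimes B$. A single application of (2) shows that $A \otimes \id_n$ and $\id_m \otimes B$ commute, because each order of multiplication produces $A \otimes B$. If $Au = \alpha u$ and $Bv = \beta v$, then (2) yields $(A \otimes \id_n)(u \otimes v) = \alpha(u \otimes v)$ and $(\id_m \otimes B)(u \otimes v) = \beta(u \otimes v)$, so $u \otimes v$ is an eigenvector of $A \oplus B$ with eigenvalue $\alpha + \beta$. When $A$ and $B$ are both diagonalisable, the $mn$ tensor products $u_i \otimes v_j$ form a basis, exhausting the spectrum. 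The main obstacle is the non-diagonalisable case: here I would invoke simultaneous Schur triangularisation, putting $A$ and $B$ into upper-triangular forms $T_A = U^* A U$ and $T_B = V^* B V$ with eigenvalues on the diagonals, then observe via (2) that $(U \otimes V)^*(A \oplus B)(U \otimes V) = T_A \otimes \id_n + \id_m \otimes T_B$ is upper triangular with diagonal entries $\alpha_i + \beta_j$.

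Finally, for property (4), I would expand the exponential as a power series. Since $A \otimes \id_n$ and $\id_m \otimes B$ commute, $\exp(A \oplus B) = \exp(A \otimes \id_n)\exp(\id_m \otimes B)$. Repeated use of (2) gives $(A \otimes \id_n)^k = A^k \otimes \id_n$, so summing term by term yields $\exp(A \otimes \id_n) = \exp(A) \otimes \id_n$, and analogously $\exp(\id_m \otimes B) = \id_m \otimes \exp(B)$. A final application of (2) to the product delivers $\exp(A) \otimes \exp(B)$. The entire proof is thus bootstrapped from the block-multiplication identity (2), and no step beyond the non-diagonalisable case in (3) requires real care.
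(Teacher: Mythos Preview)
Your arguments are all correct and constitute the standard textbook proofs of these identities. However, the paper itself does not prove this proposition at all: it is placed in the appendix as background material, and each item is simply attributed to the corresponding result in \citet{Hor:94a} (Properties~4.2.4--4.2.5, Lemma~4.2.10, Theorem~4.4.5, and Chapter~6, Problem~14, respectively). So rather than taking the same or a different route, you have supplied genuine proofs where the paper is content with a citation; your bootstrapping of (3) and (4) from the mixed-product identity (2), including the Schur-triangularisation step for the non-diagonalisable case, is exactly how Horn and Johnson themselves argue, so nothing is lost and the exposition is self-contained.
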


If we cannot assume that a matrix has a complete set of eigenvectors
so that it may not be diagonalisable we can still triangularise this
matrix over the complex numbers~$\mathbb{C}$. The process of
triangulation can be described by the Schur decomposition:

\begin{proposition}[Schur decomposition]
\label{the:schur}
  For a square matrix~$A\in\mathbb{R}^{m \times m}$ there exists a unitary
  matrix~$\Theta \in \mathbb{C}^{m \times m}$ and an upper triangular matrix~$T$
  such that 

  \begin{equation}
    \label{eq:schur}
    T = \Theta^* A \Theta
  \end{equation}
where~$\Theta^*$ is the conjugate transpose of $\Theta$; \eqref{eq:schur} is known as the Schur decomposition. \\ Let~$A\in\mathbb{R}^{m \times m}$ and~$B\in\mathbb{R}^{n \times n}$ with Schur decompositions

\[
T_A=V^* A V, \quad T_B = W^* B W.
\]
Schur decompositions for the Kronecker product~$A\otimes B$ and the
Kronecker sum~$A \oplus B$ can then be obtained via
\begin{equation}
  \label{eq:schurkronecker}
  T_{A \otimes B}=(V \otimes W)^* A\otimes B (V \otimes W), \quad T_{A \oplus B} = (V \otimes W)^* A\oplus B (V \otimes W).
\end{equation}
\end{proposition}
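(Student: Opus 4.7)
The plan is to establish the Schur decomposition in two stages: first the classical existence result for a single square matrix, then the compatibility with Kronecker products and sums, which will follow from routine manipulations using the properties of~`$\otimes$' already listed in Appendix~\ref{sec:mathsbackground}.

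For the existence part, I would proceed by induction on~$m$. The case~$m=1$ is trivial. For the inductive step, the fundamental theorem of algebra guarantees that the characteristic polynomial of~$A$ has a root~$\lambda \in \mathbb{C}$, hence there is a unit eigenvector~$v_1 \in \mathbb{C}^m$ with~$A v_1 = \lambda v_1$. Complete~$\{v_1\}$ to a unitary basis~$\{v_1, \dots, v_m\}$ of~$\mathbb{C}^m$ (for instance by Gram--Schmidt), and collect these as columns of a unitary matrix~$U_1$. Then $U_1^* A U_1$ has the block form $\bigl( \begin{smallmatrix} \lambda & * \\ 0 & A' \end{smallmatrix} \bigr)$ with~$A' \in \mathbb{C}^{(m-1)\times(m-1)}$. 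Applying the inductive hypothesis to~$A'$ yields a unitary~$U'$ that triangularises~$A'$, and the matrix $\Theta := U_1 \cdot \operatorname{diag}(1, U')$ is then the desired unitary matrix producing~$T = \Theta^* A \Theta$ upper triangular.

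For the Kronecker product claim, I would first verify that~$V \otimes W$ is unitary whenever~$V$ and~$W$ are: using the transposition/conjugation property~\eqref{eq:transposition} and the multiplication compatibility~\eqref{eq:compMatrixTensor},
\[
(V \otimes W)^* (V \otimes W) = (V^* \otimes W^*)(V \otimes W) = (V^* V) \otimes (W^* W) = \id_m \otimes \id_n = \id_{mn}.
\]
Then, again invoking~\eqref{eq:compMatrixTensor} twice,
\[
(V \otimes W)^*(A \otimes B)(V \otimes W) = (V^* A V) \otimes (W^* B W) = T_A \otimes T_B.
\]
It remains to observe that the Kronecker product of two upper triangular matrices is upper triangular. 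This is immediate from the block form~\eqref{eq:kroneckerprod}: the $(i,j)$ block $a_{ij} T_B$ vanishes for~$i>j$ since~$T_A$ is upper triangular, and within each on-diagonal block~$a_{ii} T_B$ the matrix~$T_B$ itself is upper triangular, so the resulting $mn \times mn$ matrix is upper triangular. This yields the first identity of~\eqref{eq:schurkronecker}.

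For the Kronecker sum, I would expand using the definition~\eqref{eq:kroneckersum} and distribute the similarity transform:
\[
(V \otimes W)^*(A \oplus B)(V \otimes W) = (V \otimes W)^*(A \otimes \id_n)(V \otimes W) + (V \otimes W)^*(\id_m \otimes B)(V \otimes W).
\]
Each term is handled by the same compatibility identity~\eqref{eq:compMatrixTensor}, giving~$T_A \otimes \id_n + \id_m \otimes T_B = T_A \oplus T_B$. Since the sum of two upper triangular matrices is upper triangular, this completes the argument. The main conceptual obstacle is really just the existence proof, which requires the fundamental theorem of algebra and an inductive unitary extension; the Kronecker compatibility is then a short bookkeeping exercise given the identities collected in the appendix.
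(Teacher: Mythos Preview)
Your proof is correct and complete. The paper itself does not give a proof at all: it simply cites \citet{Hor:85a}, Theorem~2.3.1, for the existence of the Schur decomposition, and the proofs of Theorems~4.2.12 and~4.4.5 in \citet{Hor:94a} for the Kronecker identities~\eqref{eq:schurkronecker}. Your induction argument (eigenvector, unitary completion, block-triangular reduction) is exactly the standard textbook proof found in those references, and your verification of~\eqref{eq:schurkronecker} via~\eqref{eq:transposition} and~\eqref{eq:compMatrixTensor} is precisely the computation underlying the cited theorems. So there is no genuine difference in approach---you have simply written out in full what the paper delegates to the literature.
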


\begin{proof}
  See \citet{Hor:85a}, theorem 2.3.1. For~\eqref{eq:schurkronecker} we
  refer to the proofs of Theorems~4.2.12 and 4.4.5 in \citet{Hor:94a}.
\end{proof}
}


\bibliographystyle{/Users/merlin/latex/style/myapalike2}
\bibliography{/Users/merlin/Documents/references/refer}

\end{document}